\documentclass[a4paper]{article}
\usepackage[margin=30mm]{geometry}

\usepackage{graphicx}

\usepackage{amsmath,amssymb}
\usepackage{url}
\usepackage{rotating}
\usepackage{comment}
\usepackage{multirow}
\usepackage{amsthm}
\theoremstyle{plain}
\newtheorem{theorem}{Theorem}

\newtheorem{lemma}[theorem]{Lemma}
\newtheorem{corollary}[theorem]{Corollary}

\usepackage[algoruled,linesnumbered,algo2e,vlined]{algorithm2e}

\newcommand{\emptystr}{\varepsilon}

\newcommand{\lcp}{\mathsf{lcp}} % longest common prefix
\newcommand{\lcpinf}{\mathsf{lcp}^{\infty}}
\newcommand{\LCPinf}{\mathsf{LCP}^{\infty}} % LCP-infty array
\newcommand{\lcpmax}{\lambda} % \lcpmax_{T}

\newcommand{\N}{\mathbf{N}}

\newcommand{\sSigma}{\Sigma_{s}}
\newcommand{\pSigma}{\Sigma_{p}}
\newcommand{\ssigma}{\sigma_{s}}
\newcommand{\psigma}{\sigma_{p}}
\newcommand{\penc}[1]{\langle#1\rangle}
\newcommand{\fce}{\ensuremath{\pi}} % first character encoding

\newcommand{\SR}{\mathsf{R}}
\newcommand{\SUR}{\mathsf{R}^{-1}}

\newcommand{\LF}{\mathsf{LF}}
\newcommand{\FL}{\mathsf{FL}}
\newcommand{\Lstr}{\mathsf{L}}
\newcommand{\Fstr}{\mathsf{F}}
\newcommand{\rank}{\mathsf{rank}}
\newcommand{\select}{\mathsf{select}}

\newcommand{\GetInterval}{\mathsf{GetMI}}

\newcommand{\LFNQ}{\mathsf{FPQ}}
\newcommand{\RFNQ}{\mathsf{FNQ}}

\newcommand{\occ}{\mathsf{occ}}

\newcommand{\RmQ}{\mathsf{RmQ}}

\newcommand{\idtt}[1]{\ensuremath{\mathtt{#1}}}

  % Use with capital letter as arguments only

\usepackage{hyperref}
\usepackage{xcolor}

\bibliographystyle{plainurl}% the mandatory bibstyle

\title{Breaking a Barrier in Constructing Compact Indexes for Parameterized Pattern Matching} %TODO Please add

\author{
  Kento~Iseri\\
  {Kyushu Institute of Technology, Fukuoka, Japan}\\
  {\texttt{iseri.kento210@mail.kyutech.jp}}\\
  \\
  Tomohiro~ I\\
  {Kyushu Institute of Technology, Fukuoka, Japan}\\
  {\texttt{tomohiro@ai.kyutech.ac.jp}}\\
  \\
  Diptarama~Hendrian\\
  {Tohoku University, Sendai, Japan}\\
  {\texttt{diptarama@tohoku.ac.jp}}\\
  \\
  Dominik~K{\"{o}}ppl\\
  {Department of Computer Science, Universität M\"{u}nster, Germany}\\
  {\texttt{dominik.koeppl@uni-muenster.de}}\\
  \\
  Ryo~Yoshinaka\\
  {Tohoku University, Sendai, Japan}\\
  {\texttt{ryoshinaka@tohoku.ac.jp}}\\
  \\
  Ayumi~Shinohara\\
  {Tohoku University, Sendai, Japan}\\
  {\texttt{ayumis@tohoku.ac.jp}}\\
}

\date{}
\begin{document}

\maketitle              % typeset the header of the contribution
\begin{abstract}
A parameterized string (p-string) is a string over an alphabet $(\sSigma \cup \pSigma)$, 
where $\sSigma$ and $\pSigma$ are disjoint alphabets for static symbols (s-symbols) and for parameter symbols (p-symbols), respectively.
Two p-strings $x$ and $y$ are said to parameterized match (p-match) if and only if
$x$ can be transformed into $y$ by applying a bijection on $\pSigma$ to every occurrence of p-symbols in $x$.
The indexing problem for p-matching is to preprocess a p-string $T$ of length $n$ 
so that we can efficiently find the occurrences of substrings of $T$ that p-match with a given pattern.
Extending the Burrows-Wheeler Transform (BWT) based index for exact string pattern matching,
Ganguly et al. [SODA 2017] proposed the first compact index (named pBWT) for p-matching,
and posed an open problem on how to construct it in compact space, i.e., in $O(n \lg |\sSigma \cup \pSigma|)$ bits of space.
Hashimoto et al. [SPIRE 2022] partially solved this problem by showing how to construct
some components of pBWTs for $T$ in $O(n \frac{|\pSigma| \lg n}{\lg \lg n})$ time in an online manner
while reading the symbols of $T$ from right to left.
In this paper, we improve the time complexity to $O(n \frac{\lg |\pSigma| \lg n}{\lg \lg n})$.
We remark that removing the multiplicative factor of $|\pSigma|$ from the complexity is of great interest 
because it has not been achieved for over a decade
in the construction of related data structures like parameterized suffix arrays even in the offline setting.
We also show that our data structure can support backward search, a core procedure of BWT-based indexes, at any stage of the online construction,
making it the first compact index for p-matching that can be constructed in compact space and even in an online manner.
\end{abstract}

\section{Introduction}\label{sec:intro}
A \emph{parameterized string} (\emph{p-string}) is a string over an alphabet $(\sSigma \cup \pSigma)$, 
where $\sSigma$ and $\pSigma$ are disjoint alphabets for \emph{static symbols} (\emph{s-symbols}) and for \emph{parameter symbols} (\emph{p-symbols}), respectively.
Two p-strings $x$ and $y$ are said to \emph{parameterized match} (\emph{p-match}) if and only if
$x$ can be transformed into $y$ by applying a bijection on $\pSigma$ to every occurrence of p-symbols in $x$.
P-matching was introduced by Baker aiming at software maintenance and plagiarism detection
~\cite{1993Baker_TheorOfParamPatterMatch_STOC,1996Baker_ParamPatterMatchAlgorAnd,1997Baker_ParamDuplicInStrinAlgor},
and has been extensively studied in the last decades (see a recent survey~\cite{2020MendivelsoTP_BriefHistorOfParamMatch} and references therein).

The indexing problem for p-matching is to preprocess a p-string $T$ of length $n$ 
so that we can efficiently find the occurrences of substrings of $T$ that p-match with a given pattern.
Extending indexes for exact string pattern matching,
there have been proposed several data structures that can be used as indexes for p-matching, e.g.,
parameterized suffix trees~\cite{1993Baker_TheorOfParamPatterMatch_STOC,1995Kosaraju_FasterAlgorForConstOf,1996Baker_ParamPatterMatchAlgorAnd,1997Baker_ParamDuplicInStrinAlgor}, 
parameterized suffix arrays~\cite{2008DeguchiHBIT_ParamSuffixArrayForBinar_PSC,2009IDBIT_LightParamSuffixArrayConst_IWOCA,2012BealA_PSuffixSortinAsArith,2019FujisatoNIBT_DirecLinearTimeConstOf_SPIRE},
parameterized suffix trays~\cite{2021FujisatoNIBT_ParamSuffixTray_CIAC},
parameterized DAWGs~\cite{2022NakashimaFHNYIBST_ParamDawgsEfficConstAnd},
parameterized position heaps~\cite{2017DiptaramaKONS_PositHeapsForParamStrin_CPM,2018FujisatoNIBT_RightToLeftOnlinConst_PSC,2019FujisatoNIBT_ParamPositHeapOfTrie_CIAC} and
parameterized Burrows-Wheeler Transform (pBWT)~\cite{2017GangulyST_PbwtAchievSuccinDataStruc_SODA,2021KimC_SimplFmIndexForParam,2022GangulyST_FullyFunctParamSuffixTrees_ICALP}.

Among these indexes, pBWTs are the most space economic, consuming $n \lg \sigma + O(n)$ bits~\cite{2017GangulyST_PbwtAchievSuccinDataStruc_SODA} or
$2n \lg \sigma + 2n + o(n)$ bits with a simplified version proposed in~\cite{2021KimC_SimplFmIndexForParam},
where $\sigma$ is the alphabet size.
Let $\ssigma$ and respectively $\psigma$ be the numbers of distinct s-symbols and p-symbols that appear in $T$.
The pBWT of $T$ can be constructed via the parameterized suffix tree of $T$
for which $O(n (\lg \ssigma + \lg \psigma))$-time or randomized $O(n)$-time construction 
algorithms are known~\cite{1995Kosaraju_FasterAlgorForConstOf,2003ColeH_FasterSuffixTreeConstWith,2011LeeNP_LineConstOfParamSuffix},
but the intermediate memory footprint of $O(n \lg n)$ bits could be intolerable when $n$ is significantly larger than $\sigma$.
Hashimoto et al.~\cite{2022HashimotoHKYS_ComputParamBurrowWheelTrans_SPIRE} showed 
how to construct some components of the pBWT of~\cite{2021KimC_SimplFmIndexForParam} for $T$
in $O(n \frac{\psigma \lg n}{\lg \lg n})$ time in an online manner
while reading the symbols of $T$ from right to left.

In this paper, we improve the time complexity of~\cite{2022HashimotoHKYS_ComputParamBurrowWheelTrans_SPIRE} to $O(n \frac{\lg \psigma \lg n}{\lg \lg n})$.
Removing the multiplicative factor of $\psigma$ from the time complexity of~\cite{2022HashimotoHKYS_ComputParamBurrowWheelTrans_SPIRE} is of great interest 
because it has not been achieved for over a decade
in the construction of related data structures like parameterized suffix arrays even in an offline setting~\cite{2019FujisatoNIBT_DirecLinearTimeConstOf_SPIRE}.
We also show that our data structure can support backward search, a core procedure of BWT-based indexes, at any stage of the online construction,
making it the first compact index for p-matching that can be constructed in compact space and even in an online manner.
This is not likely to be achieved with the previous work~\cite{2022HashimotoHKYS_ComputParamBurrowWheelTrans_SPIRE}
due to the lack of support for 2D range counting queries in the data structure it uses.

Our computational assumptions are as follows:
\begin{itemize}
  \item We assume a standard Word-RAM model with word size $\Omega(\lg n)$.
  \item Each symbol in $(\sSigma \cup \pSigma)$ is represented by $O(\lg n)$ bits.
  \item We can distinguish symbols in $\sSigma$ and $\pSigma$ in $O(1)$ time, 
    e.g., by having some flag bits or thresholds separating them each other.
  \item The order on s-symbols are determined in $O(1)$ time based on their bit representations.
\end{itemize}

An index of a p-string $T$ for p-matching is to support, given a pattern $w$,
the counting queries that compute the number of occurrences of substrings that p-match with $w$ and
the locating queries that compute the occurrences of $w$.
Our main result is:
\begin{theorem}\label{theo:index}
  For a p-string $T$ of length $n$ over an alphabet $(\sSigma \cup \pSigma) \subseteq [0..\sigma]$,
  an index of $T$ for p-matching can be constructed online in
  $O(n \frac{\lg \psigma \lg n}{\lg \lg n})$ time and $O(n \lg \sigma)$ bits of space,
  where $\psigma$ is the number of distinct p-symbols used in the p-string.
  At any stage of the online construction, it can support the counting queries 
  in $O(m \frac{\lg \psigma \lg n}{\lg \lg n})$ time and 
  the locating queries in $O(m \frac{\lg \psigma \lg n}{\lg \lg n} + \occ \frac{\lg^2 n}{\lg \sigma \lg \lg n})$ time,
  where $m$ is the pattern length and $\occ$ is the number of occurrences to be reported.
\end{theorem}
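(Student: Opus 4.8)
The plan is to maintain, online while prepending the symbols of $T$ from right to left, a fully dynamic version of the parameterized Burrows--Wheeler Transform (pBWT) of Ganguly et al.~\cite{2017GangulyST_PbwtAchievSuccinDataStruc_SODA} in the simplified form of Kim and Cho~\cite{2021KimC_SimplFmIndexForParam}. Recall that two p-strings p-match iff their \emph{prev-encodings} agree, where a p-symbol at position $i$ is replaced by $0$ if it is the first occurrence of that p-symbol in the string and otherwise by the distance to its previous occurrence. The pBWT sorts the prev-encodings of the suffixes of $T$ (with a sentinel) and stores an $\Fstr$-array describing this order, an $\Lstr$-array encoding for each sorted suffix $T[i..n]$ the information about $T[i-1]$ needed to locate $T[i-1..n]$ in the order, and auxiliary arrays making $\LF$ and $\FL$ well defined on p-symbols---which, unlike in the plain BWT, is nontrivial because every leading p-symbol of a prev-encoding equals $0$, so the preceding p-symbol is not recoverable from a single cell. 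The key structural claim, to be made precise, is that $\Lstr$ and the auxiliary arrays can be taken over an alphabet of size $O(\sigma)$ and updated, per prepended symbol, by touching only $O(\lg\psigma)$ cells instead of the $\Theta(\psigma)$ cells implicit in Hashimoto et al.~\cite{2022HashimotoHKYS_ComputParamBurrowWheelTrans_SPIRE}.

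First I would fix the representation: keep $\Fstr$, $\Lstr$ and the auxiliary arrays in dynamic strings and bitvectors over an alphabet of size $O(\sigma)$ supporting $\rank$, $\select$, $\access$, $\ins$ and $\del$ in $O(\frac{\lg n}{\lg\lg n})$ time within $O(n\lg\sigma)$ bits, together with a sparse sample of the parameterized suffix array holding one entry per $\Theta(\frac{\lg n}{\lg\sigma})$ text positions, which also costs $O(n\lg\sigma)$ bits. Then I would spell out one online step. Given the position of $T[i..n]$ in the sorted order, I compute the position of $T[i-1..n]$ by one $\LF$-type step; if $T[i-1]\in\sSigma$ this is the classical BWT backward step in $O(\frac{\lg n}{\lg\lg n})$ time, and if $T[i-1]\in\pSigma$ it additionally requires resolving, relative to $T[i..n]$, the perturbation of the prev-encoding caused by prepending the p-symbol $T[i-1]$---namely a single $0$ becoming a positive distance at the first occurrence of $T[i-1]$ within $T[i..n]$. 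This is exactly where the $\psigma$ factor of~\cite{2022HashimotoHKYS_ComputParamBurrowWheelTrans_SPIRE} enters, and I would replace its $\Theta(\psigma)$-pass computation by $O(\lg\psigma)$ operations on a balanced search structure over p-symbols---equivalently, a constant number of two-dimensional range queries on a dynamic wavelet tree whose axes are the p-symbol and the position of its first occurrence inside a suffix. Inserting $T[i-1..n]$ into all arrays and applying the $O(\lg\psigma)$ cell corrections identified by the wavelet tree then costs $O(\lg\psigma\cdot\frac{\lg n}{\lg\lg n})$, which yields the claimed $O(n\frac{\lg\psigma\lg n}{\lg\lg n})$ construction time within $O(n\lg\sigma)$ bits.

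For queries I would run backward search on the prev-encoding of $w$ with the same primitives. An s-symbol step is the ordinary FM-index backward step on $\Lstr$. A p-symbol step must, among the suffixes of the current interval $[b,e)$ whose preceding symbol is a p-symbol, keep exactly those whose first-occurrence pattern over the already matched window is compatible with the prev-encoding of $w$---either the leading p-symbol of $w$ is fresh in that window, or it first recurs at a fixed offset, which forces that window position to currently carry code $0$---and map them to the new interval; counting and mapping these is again a constant number of range-counting queries on the dynamic wavelet tree over p-symbols, so one step costs $O(\lg\psigma\cdot\frac{\lg n}{\lg\lg n})$ and counting a pattern of length $m$ costs $O(m\frac{\lg\psigma\lg n}{\lg\lg n})$. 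Locating is the standard walk from each reported suffix to its nearest sampled suffix-array position: $O(\frac{\lg n}{\lg\sigma})$ $\LF$-steps per occurrence, each in $O(\frac{\lg n}{\lg\lg n})$ time, which accounts for the extra $O(\occ\frac{\lg^2 n}{\lg\sigma\lg\lg n})$ term. Since every primitive is available after every prepend, the index answers queries at any stage of the construction.

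The part I expect to be the real obstacle is precisely the one distinguishing this work from~\cite{2022HashimotoHKYS_ComputParamBurrowWheelTrans_SPIRE}: designing the dynamic wavelet tree over p-symbols so that (i) it is consistent with the sorted order of prev-encoded suffixes, (ii) it simultaneously serves the online $\LF$-perturbation and the p-symbol backward step, (iii) a single prepend disturbs only $O(\lg\psigma)$ of its cells with no cascading re-sort, and (iv) together with $\Lstr$ and the auxiliary arrays it still fits in $O(n\lg\sigma)$ bits even though the underlying distances can be as large as $n$. Secondary and more routine obstacles are keeping the sparse suffix-array sample correct under insertions and verifying in full detail that the compatibility condition for a p-symbol backward step is exactly captured by the chosen two-dimensional structure.
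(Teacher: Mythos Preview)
Your high-level framework (dynamic pBWT arrays, backward search, suffix-array sampling) matches the paper, and you have correctly isolated the bottleneck: computing the rank $\hat{k}$ of the new suffix $cT$ when $c\in\pSigma$. But the mechanism you propose for that step is where the gap lies.

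You claim that the $\Theta(\psigma)$ work of Hashimoto et al.\ can be replaced by ``$O(\lg\psigma)$ operations on a balanced search structure over p-symbols'' or ``a constant number of two-dimensional range queries on a dynamic wavelet tree whose axes are the p-symbol and the position of its first occurrence inside a suffix,'' and you budget a worst-case $O(\frac{\lg\psigma\lg n}{\lg\lg n})$ per prepend. This is not what the paper does, and it is not clear your structure exists: the ``position of the first occurrence inside a suffix'' can be $\Theta(n)$, so an axis indexed by it does not obviously fit in $O(n\lg\sigma)$ bits, and you give no argument that a constant number of orthogonal range queries on any such grid suffices to locate $\hat{k}$ among the sorted prev-encodings. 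You yourself flag (iii) and (iv) as open; they are the whole theorem.

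The paper's actual idea is different in two ways. First, it maintains an additional array $\LCPinf_T$ whose $i$-th entry is the number of $\infty$'s in the longest common prefix of the $(i-1)$-st and $i$-th sorted p-encoded suffixes; since these values lie in $[0..\psigma]$, a dynamic wavelet matrix over $\LCPinf_T$ supports $\RmQ$, $\LFNQ$, $\RFNQ$ in $O(\frac{\lg\psigma\lg n}{\lg\lg n})$ time and $O(n\lg\psigma)$ bits. The algorithm for $\hat{k}$ then walks a parameter $e$ downward from $\min\{\fce(\hat T),\max\{\LCPinf_T[k],\LCPinf_T[k+1]\}\}$, at each level using $\GetInterval$ on $\LCPinf_T$ together with $\LFNQ/\RFNQ$ on $\Lstr_T$ to probe for the p-pred or p-succ of $\penc{\hat T}$. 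Second, the per-step cost is \emph{not} worst-case $O(\lg\psigma)$ queries: a single step may issue $\Theta(\psigma)$ queries. The $O(n)$ total query bound comes from an amortized potential argument, the potential being $\max\{\LCPinf_T[k],\LCPinf_T[k+1]\}$, which increases by at most one per step (this is Corollary~\ref{cor:lcpinf_atmost}) and pays for the descent in $e$. The $\lg\psigma$ in the final bound is the depth of the wavelet matrix, not a count of cell updates. Your proposal has neither the $\LCPinf$ array nor the amortization, and without one of them the claimed time bound is unsupported.

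For backward search on a fresh p-symbol you again invoke an unspecified 2D structure; the paper instead uses $\LCPinf_T$ via $\GetInterval$ to find the enclosing $(e{+}1)$-interval and a range count on $\Lstr_T$ to offset into it. The sampling part of your locate argument is fine and matches the paper.
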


\section{Preliminaries}\label{sec:prelim}
\subsection{Basic notations and tools}
An integer interval $\{ i, i+1, \dots, j\}$ is denoted by $[i..j]$, 
where $[i..j]$ represents the empty interval if $i > j$.

Let $\Sigma$ be an ordered finite \emph{alphabet}.
An element of $\Sigma^*$ is called a \emph{string} over $\Sigma$.
The length of a string $w$ is denoted by $|w|$. 
The empty string $\emptystr$ is the string of length 0,
that is, $|\emptystr| = 0$.
Let $\Sigma^+ = \Sigma^* - \{\emptystr\}$ and $\Sigma^k = \{ x \in \Sigma^* \mid |x| = k \}$ for any non-negative integer $k$.
The concatenation of two strings $x$ and $y$ is denoted by $x \cdot y$ or simply $xy$.
When a string $w$ is represented by the concatenation of strings $x$, $y$ and $z$ (i.e. $w = xyz$), 
then $x$, $y$ and $z$ are called a \emph{prefix}, \emph{substring}, and \emph{suffix} of $w$, respectively.
A substring $x$ of $w$ is called \emph{proper} if $x \neq w$.

The $i$-th symbol of a string $w$ is denoted by $w[i]$ for $1 \leq i \leq |w|$,
and the substring of a string $w$ that begins at position $i$ and
ends at position $j$ is denoted by $w[i..j]$ for $1 \leq i \leq j \leq |w|$,
i.e., $w[i..j] = w[i]w[i+1] \cdots w[j]$.
For convenience, let $w[i..j] = \emptystr$ if $j < i$; further let $w[..i] = w[1..i]$ and $w[i..] = w[i..|w|]$ denote abbreviations for the prefix of length $i$ and the suffix starting at position $i$, respectively.
For two strings $x$ and $y$, let $\lcp(x, y)$ denote the length of the longest common prefix between $x$ and $y$.
We consider the lexicographic order over $\Sigma^*$ by extending the strict total order $<$ defined on $\Sigma$:
$x$ is lexicographically smaller than $y$ (denoted as $x < y$) if and only if 
either $x$ is a proper prefix of $y$ or $x[\lcp(x, y)+1] < y[\lcp(x, y)+1]$ holds.

For any string $w$, character $c$, and position $i~(1 \le i \le |w|)$,
$\rank_c(w, i)$ returns the number of occurrences of $c$ in $w[..i]$
and $\select_c(w, i)$ returns the $i$-th occurrence of $c$ in $w$.
For $1 \le i \le j \le |w|$, a \emph{range minimum query} $\RmQ_{w}(i, j)$ asks for $\arg \min_{i \le k \le j} \{ w[k] \}$.
We also consider \emph{find previous/next queries} $\LFNQ_{p}(w, i)$ and $\RFNQ_{p}(w, i)$,
where $p$ is a predicate either in the form of ``$c$'' (equal to $c$), ``$< c$'' (less than $c$) or ``$\ge c$'' (larger than or equal to $c$):
$\LFNQ_{p}(w, i)$ returns the largest position $j \le i$ at which $w[j]$ satisfies the predicate $p$.
Symmetrically, $\RFNQ_{p}(w, i)$ returns the smallest position $j \ge i$ at which $w[j]$ satisfies the predicate $p$.
For example with the integer string $w = [ 2, 5, 10, 6, 8, 3, 14, 5]$, 
$\RFNQ_{5}(w, 4) = 8$, $\RFNQ_{6}(w, 4) = 4$, $\LFNQ_{5}(w, 4) = 2$, $\RFNQ_{<5}(w, 4) = 6$, $\LFNQ_{< 5}(w, 4) = 1$, $\RFNQ_{\ge 9}(w, 4) = 7$ and $\LFNQ_{\ge 9}(w, 4) = 3$.

If the answer of $\select_c(w, i)$, $\LFNQ_{p}(w, i)$ or $\RFNQ_{p}(w, i)$ does not exist, it is just ignored.
To handle this case of non-existence, we would use them in an expression with $\min$ or $\max$:
For example, $\max\{1, \LFNQ_{p}(w, i) \}$ returns $1$ if $\LFNQ_{p}(w, i)$ does not exist.

Dynamic strings should support insertion/deletion of a symbol to/from any position as well as fast random access.
We use the following result:
\begin{lemma}[\cite{2015MunroN_ComprDataStrucForDynam_ESA}]\label{lem:drs}
  A dynamic string of length $n$ over an alphabet $[0..\sigma]$
  can be implemented while supporting random access, insertion, deletion, $\rank$ and $\select$ queries
  in $(n + o(n)) \lg \sigma$ bits of space 
  and $O(\frac{\lg n}{\lg \lg n})$ query and update times.
\end{lemma}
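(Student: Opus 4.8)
The plan is to realize the sequence as a balanced search tree over its positions, tuned so that the tree height already matches the target bound $O(\frac{\lg n}{\lg\lg n})$, and then to layer the alphabet dimension on top without paying an extra $\lg\sigma$ factor. Concretely, I would store the string in a leaf-oriented weight-balanced B-tree of fan-out $\Theta(\lg^{\epsilon} n)$ for a small constant $\epsilon>0$, whose leaves hold blocks of $\Theta(\lg^2 n/\lg\sigma)$ consecutive symbols (i.e.\ $\Theta(\lg^2 n)$ bits, a constant number of machine words up to the $\Theta(\lg n)$ word-size scaling). Because the fan-out is polylogarithmic, the height is $\log_{\lg^{\epsilon} n} n = O(\frac{\lg n}{\lg\lg n})$, and each internal node stores the subtree sizes of its children packed into $O(\lg^{\epsilon} n)$ words. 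Random access, $\ins$ and $\del$ are then routine: descend using the packed size array, locating the correct child with a fusion-tree/word-parallel predecessor step in $O(1)$ per node, edit the affected block at the leaf, and rebalance by splitting/merging nodes along the path; the weight-balanced invariant keeps the amortized (de)allocation cost $O(\frac{\lg n}{\lg\lg n})$, and the $o(n)\lg\sigma$ redundancy follows from the block-size choice since the $O(n\lg\sigma/\lg^2 n)$ nodes each carry only $O(\lg^{1+\epsilon} n)$ bits of counters.

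The substantive part is $\rank$ and $\select$, and here the natural first attempt is a generalized wavelet tree of branching factor $\rho=\Theta(\lg^{\epsilon} n)$ over the alphabet, each of whose $O(\frac{\lg\sigma}{\lg\lg n})$ levels stores a ``which-child'' sequence over the polylogarithmic alphabet $\rho$, maintained by the position-tree machinery of the previous paragraph. This reduces every $\rank_c$/$\select_c$ to one small-alphabet query per wavelet level, and the small-alphabet case is handled by keeping at each B-tree node a $\rho\times\lg^{\epsilon} n$ table of per-symbol child-counts with $O(\lg n)$-bit entries, which fits in $o(\lg^2 n)$ bits and is queried/updated in $O(1)$ time via precomputed tables over $\Theta(\lg n)$-bit words. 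The obstacle is immediate: this composition costs $O(\frac{\lg\sigma}{\lg\lg n}\cdot\frac{\lg n}{\lg\lg n})$, carrying exactly the spurious $\lg\sigma$ factor the lemma forbids. Removing it is the crux and is the real content of the cited construction.

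To shave the $\lg\sigma$ factor I would avoid traversing the wavelet tree level by level and instead fold the alphabet information into the single position tree, so that the $O(\frac{\lg n}{\lg\lg n})$ figure comes solely from the position-tree height while the alphabet contributes only to the per-node payload. The idea is that, on one root-to-leaf descent for position $i$, the partial count of the query symbol $c$ contributed by each node (the number of occurrences of $c$ among the children to the left of the one we enter) can be extracted in $O(1)$ amortized time, because these running counts along the path form a monotone sequence maintainable with word-packed searchable-partial-sum structures and fusion-tree search rather than recomputed per wavelet level. I expect the main difficulty to be exactly this bookkeeping: guaranteeing that the per-symbol counting at each of the $O(\frac{\lg n}{\lg\lg n})$ nodes is genuinely $O(1)$ and $\sigma$-independent despite the alphabet being arbitrarily large, which is what forces the delicate interplay of a polylog-fan-out tree, alphabet partitioning into polylog-sized groups, precomputed tables for the $\Theta(\lg n)$-bit leaf subproblems, and fusion-tree predecessor search. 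Once these pieces fit together, the space stays $(n+o(n))\lg\sigma$ and all four operations run in $O(\frac{\lg n}{\lg\lg n})$ time, matching the $\Omega(\frac{\lg n}{\lg\lg n})$ cell-probe lower bound and establishing the lemma.
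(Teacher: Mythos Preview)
The paper does not prove this lemma at all: it is stated as a black-box citation of Munro and Nekrich~[ESA 2015], with no argument given. There is therefore no ``paper's own proof'' to compare your proposal against; the authors simply import the result.

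As for your sketch on its own merits: the first two paragraphs are accurate and standard (polylog-fan-out weight-balanced B-tree over packed leaf blocks gives the right height, space, and access/update times; a naive multiary wavelet tree composed with it overshoots by a $\lg\sigma/\lg\lg n$ factor). The third paragraph, however, is where the actual content of the cited result lies, and there you are waving your hands. Saying that per-node symbol counts ``can be extracted in $O(1)$ amortized time'' via ``word-packed searchable-partial-sum structures and fusion-tree search'' is not a construction; for a general alphabet of size $\sigma$ a B-tree node cannot store per-symbol counts for all $\sigma$ symbols in $o(\lg^2 n)$ bits, so the $\sigma$-independence you claim does not follow from what you wrote. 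The Munro--Nekrich paper gets around this with a considerably more intricate two-level scheme combining a list-of-blocks representation, per-block succinct rank/select on small alphabets, and a careful global structure for tracking symbol counts across blocks; your proposal identifies the obstacle correctly but does not actually overcome it. For the purposes of the present paper this is moot, since the lemma is only being quoted, but if you intend to supply a proof you would need to reproduce substantially more of the cited construction.
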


Dynamic binary strings equipped with $\rank$ and $\select$ queries can be used as
a building block for the dynamic wavelet matrix~\cite{2015ClaudeNP_WavelMatrixEfficWavelTree} of a string over an alphabet $[0..\sigma]$
to support queries beyond $\rank$ and $\select$.
The idea is that each of the other queries can be simulated by performing one of the building block queries on every level of the wavelet matrix, 
which has $\lceil \lg \sigma \rceil$ levels, cf.~\cite[Section 6.2.]{navarro14wavelet}.
\begin{lemma}\label{lem:dwm}
  A dynamic string of length $n$ over an alphabet $[0..\sigma]$ with $\sigma = O(n)$
  can be implemented while supporting random access, insertion, deletion, 
  $\rank$, $\select$, $\RmQ$, $\LFNQ$ and $\RFNQ$ queries
  in $(n + o(n)) \lceil \lg \sigma \rceil + O(\lg \sigma \lg n)$ bits of space
  and $O(\frac{\lg \sigma \lg n}{\lg \lg n})$ query and update times.
\end{lemma}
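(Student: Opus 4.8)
The plan is to layer a dynamic wavelet matrix over $\ell := \lceil\lg\sigma\rceil$ dynamic bit vectors, each instantiated from Lemma~\ref{lem:drs} with a binary alphabet. Level $k$ holds a bit vector $B_k$ of the current length $n$: $B_0$ lists the most significant bits of $S$ in text order, and, writing the symbols in the order obtained by stably sorting on their first $k$ bits, $B_k$ lists their $(k{+}1)$-st bits; as is standard for wavelet matrices we also store the zero counts $z_k = \rank_0(B_k,n)$, costing $O(\ell\lg n)$ bits altogether. Since each $B_k$ occupies $(n+o(n))$ bits, the total is $(n+o(n))\ell + O(\ell\lg n)$, as claimed (the hypothesis $\sigma=O(n)$ is only used to keep $\ell=O(\lg n)$ so that the zero counts and level counters fit in $O(1)$ words). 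Every elementary move — mapping a position from level $k$ to a child at level $k{+}1$ or to the parent at level $k{-}1$, and updating the interval of $B_k$ that represents a wavelet-tree node — is a constant number of $\rank$/$\select$/$\access$ calls on one $B_k$; inserting or deleting a symbol amounts to inserting or deleting one bit per level and adjusting the affected $z_k$'s. Hence $\access$, $\rank$, $\select$, $\ins$, $\del$ all use $O(\ell)$ bit-vector operations, i.e.\ $O(\frac{\lg\sigma\lg n}{\lg\lg n})$ query and update time.

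For $\RmQ_S(i,j)$ I would descend while maintaining the interval of $B_k$ that is the image of the symbols of $S[i..j]$ sharing the current prefix, always moving to the child carrying the smaller bit whenever that child's image is non-empty; after $\ell$ steps this isolates the minimum value $c$ of $S[i..j]$, and the answer is $\select_c(S,\rank_c(S,i-1)+1)$. The equality predicates are immediate: $\RFNQ_{=c}(S,i)=\select_c(S,\rank_c(S,i-1)+1)$ and symmetrically for $\LFNQ_{=c}$.

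The substantial cases are the threshold predicates, say $\RFNQ_{<c}(S,i)$ (the others are symmetric: $\RFNQ_{\ge c}$ takes the $1$-child at levels whose $c$-bit is $0$, together with the leaf of $c$ itself, and the $\LFNQ$ variants replace ``leftmost $\ge i$'' by ``rightmost $\le i$''). The values below $c$ split into the $O(\ell)$ canonical wavelet-tree nodes met by following the root-to-leaf path of $c$ and, at each level $k$ whose $c$-bit is $1$, peeling off the $0$-child; inside such a node \emph{every} value is ${<}c$, so the answer is the minimum over these nodes of the leftmost element $\ge i$ that enters the node. The obstacle is that this leftmost element is found in the node's local coordinates — a position inside some $B_k$ — and lifting it to a text position costs one $\select$ per intervening level, which would total $\Theta(\ell^2)$. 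I would circumvent this by fusing a single downward pass with a single upward pass: while descending along $c$'s path I record, at each level $k$ where a canonical $0$-child branches off, the local position $\pi_k$ of the first element $\ge i$ entering it (one $\rank$ plus one $\select$ on $B_k$) and also propagate the image of $i$ one level deeper; while ascending I carry one ``running best'' position stated in the coordinates of the current level, lift it one level per step with one $\select$, and at each level merge in the now-available $\pi_k$ by comparing integers (valid because each $B_k$ preserves text order). When the ascent reaches level $0$ the running best is a text position and is the answer, and the whole query costs $O(\ell)$ bit-vector operations.

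The main obstacle, as just indicated, is organizing $\RFNQ$/$\LFNQ$ so that the coordinate lifting happens incrementally during one upward sweep instead of separately for each of the $\Theta(\ell)$ canonical nodes; the remainder is routine wavelet-matrix navigation and its interaction with bit-level insertions and deletions.
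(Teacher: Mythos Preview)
Your proposal is correct and follows exactly the route the paper takes: build a dynamic wavelet matrix from $\lceil\lg\sigma\rceil$ instances of the dynamic bit vector of Lemma~\ref{lem:drs}, so that each of the listed queries and updates reduces to $O(1)$ bit-vector operations per level. The paper itself gives no details beyond that sentence and a pointer to Section~6.2 of Navarro's wavelet-tree survey; your write-up is a faithful expansion of that reference, and in particular your single-down/single-up sweep for the threshold variants of $\LFNQ$/$\RFNQ$ is precisely the standard way to keep those queries at $O(\lceil\lg\sigma\rceil)$ bit-vector operations rather than $\Theta(\lceil\lg\sigma\rceil^2)$.
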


\subsection{Parameterized strings}
Let $\sSigma$ and $\pSigma$ denote two disjoint sets of symbols.
We call a symbol in $\sSigma$ a \emph{static symbol} (\emph{s-symbol}) 
and a symbol in $\pSigma$ a \emph{parameter symbol} (\emph{p-symbol}).
A \emph{parameterized string} (\emph{p-string}) is a string over $(\sSigma \cup \pSigma)$.
Let $\infty$ represent a symbol that is larger than any integer, and let $\N_{\infty} = \N_{+} \cup \{ \infty \}$ be the set of natural positive numbers $\N_{+}$ including infinity.
We assume that $\N_{\infty} \cap \sSigma = \emptyset$ and $(\N_{\infty} \cup \sSigma)$ is an ordered alphabet
such that all s-symbols are smaller than any element in $\N_{\infty}$.
Let $\$$ be the smallest s-symbol, which will be used as an end-marker of p-strings.
For any p-string $w$ the \emph{p-encoded string} $\penc{w}$ of $w$, 
also proposed as $\mathtt{prev}_{\infty}(w)$ in~\cite{2021KimC_SimplFmIndexForParam}, 
is the string in $(\N_{\infty} \cup \sSigma)^{|w|}$ such that
\begin{equation*}
  \penc{w}[i] =
  \begin{cases}
    w[i]     & \mbox{if $w[i] \in \sSigma$}, \\
    \infty   & \mbox{if $w[i] \in \pSigma$ and $w[i]$ does not appear in $w[..i-1]$}, \\
    i - j    & \mbox{otherwise,}
  \end{cases}
\end{equation*}
where $j$ is the largest position in $[1..i-1]$ with $w[i] = w[j]$.
In other words, p-encoding transforms an occurrence of a p-symbol into 
the distance to the previous occurrence of the same p-symbol, or $\infty$ 
if it is the leftmost occurrence.
By definition, p-encoding is prefix-consistent, i.e., 
$\penc{w} = \penc{wc}[..|w|]$ for any symbol $c \in (\sSigma \cup \pSigma)$.
On the other hand, $\penc{w}$ and $\penc{cw}[2..]$ differ 
if and only if $c \in \pSigma$ occurs in $w$.
If it is the case, the leftmost occurrence $h$ of $c$ in $w$
is the unique position such that $\penc{w}$ and $\penc{cw}[2..]$ differ 
with $\penc{w}[h] = \infty$ and $(\penc{cw}[2..])[h] = \penc{cw}[h+1] = h$,
i.e., $h = \select_{c}(w, 1)$ and $h+1 = \select_{c}(cw, 2)$.

For any p-string $w$, let $|w|_{p}$ denote the number of distinct p-symbols in $w$,
i.e., $|w|_{p} = \rank_{\infty}(\penc{w}, |w|)$.
We define a function $\fce$ that maps a p-string $w$ over $(\sSigma \cup \pSigma)$
to an element in $(\sSigma \cup [1..|w|_{p}])$ such that
\begin{equation*}
  \fce(w) =
  \begin{cases}
    w[1]     & \mbox{if $w[1] \in \sSigma$}, \\
    |w[..h+1]|_{p} & \mbox{otherwise,}
  \end{cases}
\end{equation*}
where $h+1 = \min \{ |w|, \select_{w[1]}(w, 2) \}$.
In the second case, $\fce(w)$ represents the rank of p-symbol $w[1]$ when ordering all p-symbols 
by their leftmost occurrences in $w[2..]$, considering the rank of p-symbols not in $w[2..]$ to be $|w|_{p}$.
If $\select_{w[1]}(w, 2)$ exists, it holds that $h = \select_{\infty}(\penc{w[2..]}, \fce(w))$.
For convenience, let $\fce(\emptystr) = \$$.

For two p-strings $u$ and $v$, $\lcpinf(\penc{u}, \penc{v})$ denotes the number of 
$\infty$'s in the longest common prefix of $\penc{u}$ and $\penc{v}$.

The following lemma states basic but important properties of the p-string encoding and $\fce$,
which immediately follow from the definition (see Fig.~\ref{fig:penc_order} for illustrations).
\begin{lemma}\label{lem:penc_order}
  For any p-strings $x$ and $y$ over $(\sSigma \cup \pSigma)$ with
  $\lcpmax = \lcp(\penc{x[2..]}, \penc{y[2..]})$, $e = \lcpinf(\penc{x[2..]}, \penc{y[2..]})$ and $\penc{x[2..]} < \penc{y[2..]}$,
  Table~\ref{table:cases} shows a complete list of cases for $\lcp(\penc{x}, \penc{y})$, $\lcpinf(\penc{x}, \penc{y})$ and 
  the lexicographic order between $\penc{x}$ and $\penc{y}$, 
  where a case starting with A and B is under the condition that at least one of $\fce(x)$ and $\fce(y)$ is in $\sSigma$,
  and respectively none of $\fce(x)$ and $\fce(y)$ is in $\sSigma$.
\end{lemma}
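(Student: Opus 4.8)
The plan is to do a careful case analysis driven by the single structural difference between $\penc{x}$ and $\penc{x[2..]}$ (and similarly for $y$). Recall from the preliminary discussion that prepending a symbol $c$ to a p-string $w$ changes the p-encoding in at most one position: if $c \in \sSigma$ then $\penc{cw}[1] = c$ and $\penc{cw}[2..] = \penc{w}$; if $c \in \pSigma$ and $c$ does not occur in $w$, the same holds with $\penc{cw}[1] = \infty$; and if $c \in \pSigma$ does occur in $w$, then $\penc{cw}[1] = \infty$ and $\penc{cw}$ agrees with $\penc{w}$ shifted by one except at the leftmost occurrence $h$ of $c$, where $\penc{w}[h] = \infty$ becomes $\penc{cw}[h+1] = h$. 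So I would first restate this ``one-position perturbation'' fact precisely, together with the identification $h = \select_\infty(\penc{x[2..]}, \fce(x))$ when $\fce(x) \notin \sSigma$, which ties the position of the perturbation to $\fce$.

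First I would compare $\penc{x}[1]$ and $\penc{y}[1]$. The quantity $\fce(x)$ (resp.\ $\fce(y)$) is exactly the value that governs $\penc{x}[1]$: it equals the s-symbol $x[1]$ itself when $x[1]\in\sSigma$, and otherwise its \emph{rank among the $\infty$'s to come} tells us where the perturbation sits. This is why the statement splits into the A/B dichotomy according to whether at least one of $\fce(x),\fce(y)$ lies in $\sSigma$. In the A-cases, at least one of the two encoded strings starts with an s-symbol (or the first symbols already differ), so either $\penc{x}[1]\ne\penc{y}[1]$ — in which case $\lcp(\penc{x},\penc{y}) = 0$, $\lcpinf = 0$, and the order is decided by comparing $\penc{x}[1]$ with $\penc{y}[1]$ directly — or $\penc{x}[1]=\penc{y}[1]$ is a common s-symbol and we reduce to the tail, getting $\lcp(\penc{x},\penc{y}) = \lcpmax + 1$, $\lcpinf(\penc{x},\penc{y}) = e$ (no new $\infty$ was introduced), and the order inherited from $\penc{x[2..]} < \penc{y[2..]}$. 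In the B-cases, $\penc{x}[1]=\penc{y}[1]=\infty$, so $\lcpinf$ picks up one extra $\infty$ from position $1$, and the interesting interplay is between the perturbation positions $h_x = \select_\infty(\penc{x[2..]},\fce(x))$ and $h_y = \select_\infty(\penc{y[2..]},\fce(y))$ relative to each other and to $\lcpmax$: whichever of these is smallest is where $\penc{x}$ and $\penc{y}$ can first disagree after position $1$.

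The heart of the argument is therefore the B-case bookkeeping: given that $\penc{x[2..]}$ and $\penc{y[2..]}$ first differ at position $\lcpmax + 1$ and have $e$ occurrences of $\infty$ among their first $\lcpmax$ symbols, I would track how replacing $\infty$ by a finite value $h_x$ at position $h_x+1$ in the shifted copy (and symmetrically $h_y$) moves the first point of disagreement. The cases naturally organize by comparing $\min(h_x, h_y)$ against $\lcpmax$ and by whether $h_x = h_y$; when both perturbation points lie strictly beyond the old mismatch, $\lcp$ is unchanged at $\lcpmax + 1$; when one lands exactly inside the common region it may either create an earlier mismatch (if only one of $x,y$ is perturbed there, since $\infty$ beats any integer) or, if both are perturbed at the same spot with the same replacement value, leave the comparison to continue. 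I expect this comparison of perturbation positions — and in particular keeping straight the off-by-one coming from the shift and correctly counting the $\infty$'s consumed up to each breakpoint — to be the main obstacle; everything else is a direct unwinding of the definitions of $\penc{\cdot}$, $\lcpinf$, and $\fce$. The figure and Table~\ref{table:cases} would then just record the outcome of this enumeration, and I would verify each row against a small example like the one in the preliminaries.
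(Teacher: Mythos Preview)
Your plan is correct and matches the paper's approach: the paper does not give a detailed proof of Lemma~\ref{lem:penc_order} at all, stating only that the claims ``immediately follow from the definition'' and referring to Fig.~\ref{fig:penc_order} for illustrations. Your proposed case analysis---tracking the single perturbation position $h=\select_\infty(\penc{x[2..]},\fce(x))$ (and the analogous $h'$ for $y$) relative to $\lcpmax$ and to each other, with the A/B split according to whether a first symbol is static---is exactly the unwinding of the definitions that the paper leaves to the reader, so there is nothing to compare.
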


\begin{table}[t]
  \begin{center}
    \caption{All cases considered in Lemma~\ref{lem:penc_order}.
  On the one hand, a case starting with letter~A assumes that at least one of $\fce(x)$ and $\fce(y)$ is in $\sSigma$, while on the other hand, a case starting with letter~B
  assumes that none of $\fce(x)$ and $\fce(y)$ is in $\sSigma$.
    Here, $h = \select_{x[1]}(x[2..], 1)$ and $h' = \select_{y[1]}(y[2..], 1)$.}
    \label{table:cases}
    \begin{tabular}{l|l|l|l|l}
      cases & additional conditions                   & $\lcp(\penc{x}, \penc{y})$ & $\lcpinf(\penc{x}, \penc{y})$ & lexicographic order \\
      \hline
      (A1)  & $\fce(x) \neq \fce(y)$                  & $0$                        & $0$                           & $\penc{x} < \penc{y}$ iff $\fce(x) < \fce(y)$ \\
      (A2)  & $\fce(x) = \fce(y)$                     & $\lcpmax+1$                & $e$                           & $\penc{x} < \penc{y}$ \\
      \hline
      (B1)  & $\fce(x) = \fce(y) \le e$               & $\lcpmax+1$                & $e$                           & $\penc{x} < \penc{y}$ \\
      (B2)  & $\fce(x) \le e$ and $\fce(x) < \fce(y)$ & $h$                        & $\fce(x)$                     & $\penc{x} < \penc{y}$ \\
      (B3)  & $\fce(y) \le e$ and $\fce(y) < \fce(x)$ & $h'$                       & $\fce(y)$                     & $\penc{y} < \penc{x}$ \\
      (B4)  & $e < \min\{\fce(x), \fce(y)\}$          & $\lcpmax+1$                & $e+1$                         & $\penc{x} < \penc{y}$ \\
      \hline
    \end{tabular}
  \end{center}
\end{table}

\begin{figure}[t]
  \center{%
    \includegraphics[scale=0.25]{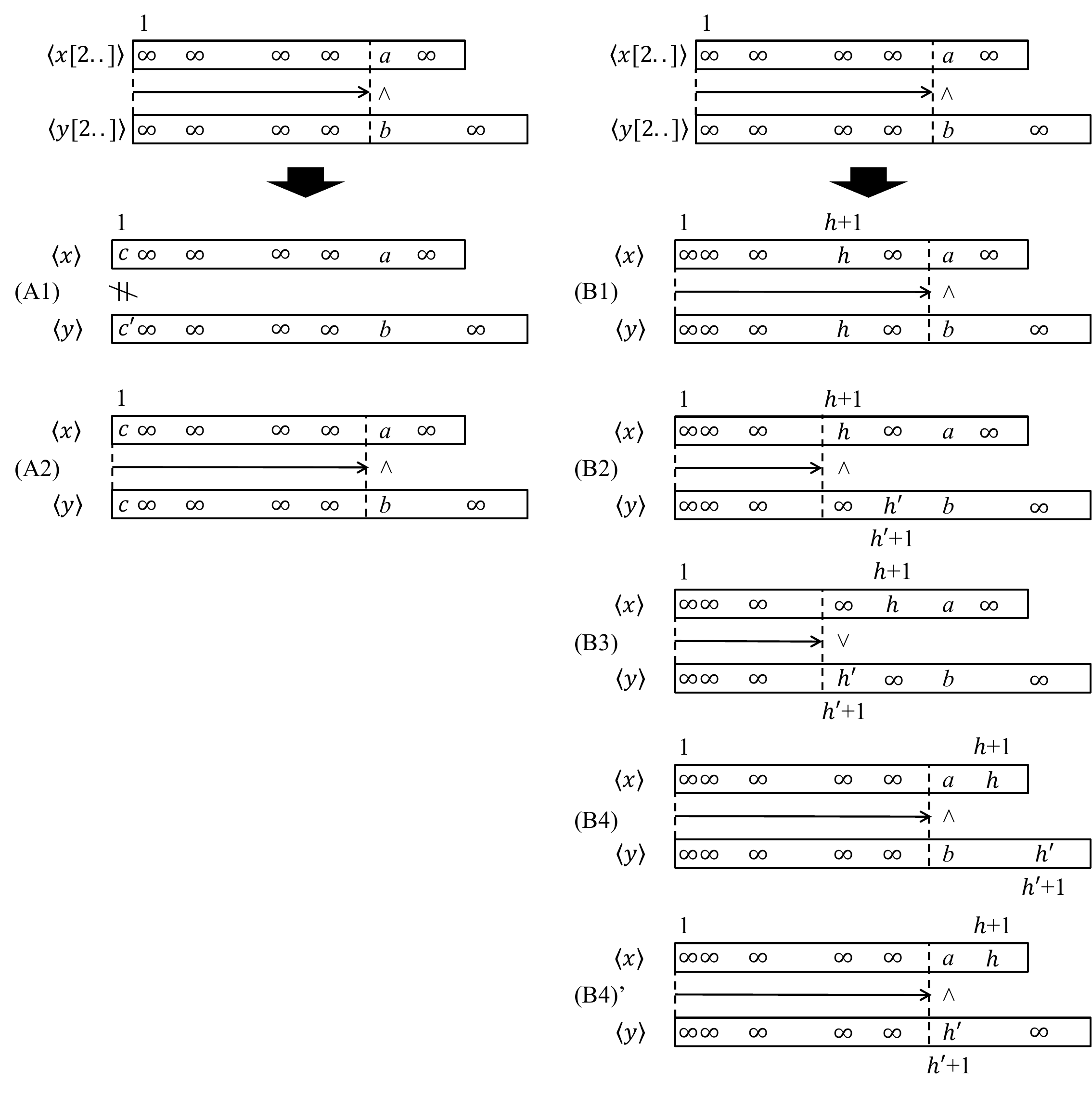}
  }
  \caption{Illustrations for the cases of Lemma~\ref{lem:penc_order}.
    Each right arrow represents the longest common prefix of two p-encoded strings, 
    and the lexicographic order between them is determined by the following p-encoded symbols.
    For Case (B1), $h = \select_{x[1]}(x[2..], 1) = \select_{y[1]}(y[2..], 1)$.
    For Case (B2)-(B4) and (B4)', $h = \select_{x[1]}(x[2..], 1)$ and $h' = \select_{y[1]}(y[2..], 1)$.
    Case (B4)' illustrates the case with $b = \infty$, which is included in Case (B4).
  }
  \label{fig:penc_order}
\end{figure}

By Lemma~\ref{lem:penc_order}, we have the following corollaries:
\begin{corollary}\label{cor:lcpinf_atmost}
  For any p-strings $x$ and $y$,
  $\lcpinf(\penc{x}, \penc{y}) \le \lcpinf(\penc{x[2..]}, \penc{y[2..]}) + 1$.
\end{corollary}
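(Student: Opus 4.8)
The plan is to read the bound straight off Lemma~\ref{lem:penc_order} and Table~\ref{table:cases}, after two small normalizations. First, since $\lcpinf$ is symmetric in its two arguments and swapping $x$ with $y$ changes neither $\lcpinf(\penc{x},\penc{y})$ nor $\lcpinf(\penc{x[2..]},\penc{y[2..]})$, I would reduce to the situation $\penc{x[2..]} \le \penc{y[2..]}$. Writing $e = \lcpinf(\penc{x[2..]},\penc{y[2..]})$, I would then split into the case $\penc{x[2..]} = \penc{y[2..]}$, which is not covered by Table~\ref{table:cases}, and the case $\penc{x[2..]} < \penc{y[2..]}$, to which Lemma~\ref{lem:penc_order} applies verbatim.

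For the equality case I would use only a crude counting bound: the longest common prefix of $\penc{x}$ and $\penc{y}$ is a prefix of $\penc{x}$ of length at most $|x|$, so $\lcpinf(\penc{x},\penc{y}) \le \rank_{\infty}(\penc{x},|x|) = |x|_{p}$. The p-symbols occurring in $x$ are exactly those occurring in $x[2..]$ together with possibly $x[1]$, hence $|x|_{p} \le |x[2..]|_{p}+1$; and when $\penc{x[2..]} = \penc{y[2..]}$ we have $e = \rank_{\infty}(\penc{x[2..]},|x[2..]|) = |x[2..]|_{p}$. Chaining these inequalities yields $\lcpinf(\penc{x},\penc{y}) \le |x[2..]|_{p}+1 = e+1$.

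For the strict case I would simply run through the six rows of Table~\ref{table:cases} and verify that the listed value of $\lcpinf(\penc{x},\penc{y})$ is always at most $e+1$: it is $0$ in row~(A1); it is $e$ in rows~(A2) and~(B1); in rows~(B2) and~(B3) it equals $\fce(x)$, respectively $\fce(y)$, which the side condition of that very row bounds by $e$; and in row~(B4) it is exactly $e+1$. Thus $\lcpinf(\penc{x},\penc{y}) \le e+1$ in all cases. There is no genuine obstacle here beyond bookkeeping; the only step needing an argument outside Lemma~\ref{lem:penc_order} is the boundary case $\penc{x[2..]} = \penc{y[2..]}$, and that is handled by the elementary counting bound above.
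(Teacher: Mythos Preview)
Your proof is correct and follows the same approach as the paper, which simply states that the corollary follows from Lemma~\ref{lem:penc_order} (i.e., from reading off the $\lcpinf$ column of Table~\ref{table:cases}). Your treatment is in fact slightly more careful: you separately handle the boundary case $\penc{x[2..]} = \penc{y[2..]}$, which Lemma~\ref{lem:penc_order} does not cover, whereas the paper leaves this implicit.
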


\begin{corollary}\label{cor:order_same}
  For any p-strings $x$ and $y$ with $\fce(x) = \fce(y)$,
  $x < y$ if and only if $\penc{x} < \penc{y}$.
\end{corollary}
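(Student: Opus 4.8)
The plan is to obtain the equivalence by a direct reading of the lexicographic-order column of Table~\ref{table:cases}, after the hypothesis $\fce(x)=\fce(y)$ has pruned the inadmissible cases. Here ``$<$'' is the parameterized order on p-strings, under which $x$ and $y$ are ranked through their prev-encodings $\penc{x}$ and $\penc{y}$; so the substance of the corollary is to certify that the order assigned to the full strings $x$ and $y$ is exactly the encoded order $\penc{x}$ versus $\penc{y}$ recorded in the last column of the table, for every configuration compatible with $\fce(x)=\fce(y)$. The connection to Lemma~\ref{lem:penc_order} is that the lemma resolves $\penc{x}$ versus $\penc{y}$ in terms of the tails $\penc{x[2..]}$ and $\penc{y[2..]}$, and I want to track this resolution back to the relation $x<y$ on the original strings.

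First I would dispose of the boundary situations: if one of $x,y$ is empty, say $x=\emptystr$, then $\fce(x)=\$$ forces $y=\emptystr$ or $y[1]=\$$, and since $\emptystr$ is a prefix of $y$ both $x<y$ and $\penc{x}<\penc{y}$ reduce to whether $y=\emptystr$. In the main situation the tails are strictly ordered; by symmetry I assume $\penc{x[2..]}<\penc{y[2..]}$, the standing hypothesis of Lemma~\ref{lem:penc_order}, and put $e=\lcpinf(\penc{x[2..]},\penc{y[2..]})$. The equality $\fce(x)=\fce(y)$ is incompatible with Case~(A1) and with Cases~(B2) and~(B3), each of which demands $\fce(x)\neq\fce(y)$; hence only (A2), (B1) and~(B4) survive, selected according to whether the common value $\fce(x)$ is an s-symbol, lies in $[1..e]$, or exceeds $e$. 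In all three surviving cases the table returns $\penc{x}<\penc{y}$, so $x<y$ and $\penc{x}<\penc{y}$ hold together. Running the identical inspection with the roles of $x$ and $y$ exchanged covers the tail-order $\penc{y[2..]}<\penc{x[2..]}$ and yields $\penc{y}<\penc{x}$, i.e.\ $y<x$; since the three tail-orderings are exhaustive, aligning them establishes $x<y\Leftrightarrow\penc{x}<\penc{y}$.

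The one place I expect to need genuine care, rather than a mere table lookup, is the remaining tail-order $\penc{x[2..]}=\penc{y[2..]}$, which Lemma~\ref{lem:penc_order} does not cover since it is stated only for strictly ordered tails. I would close this gap straight from the definitions of $\penc{\cdot}$ and $\fce$: the sole index at which the encoding of a p-string can deviate from the encoding of its tail is the leftmost occurrence of the prepended symbol, whose rank among the $\infty$'s is precisely what $\fce$ records, so identical tail-encodings together with $\fce(x)=\fce(y)$ leave no index at which $\penc{x}$ and $\penc{y}$ could differ; hence $\penc{x}=\penc{y}$ and neither $x<y$ nor $\penc{x}<\penc{y}$ holds, consistently with the equivalence. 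With this edge case settled, everything else is the mechanical case inspection above, and I therefore view confirming the $\penc{x[2..]}=\penc{y[2..]}$ case as the main (and essentially only) obstacle.
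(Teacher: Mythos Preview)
Your table-reading --- that only cases (A2), (B1), (B4) survive under $\fce(x)=\fce(y)$, and that all three preserve the order --- is precisely the argument the paper has in mind; the paper offers no explicit proof beyond ``By Lemma~\ref{lem:penc_order}''.

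However, your reading of ``$x<y$'' is off and renders the statement vacuous. You declare it to be the parameterized order, i.e.\ $\penc{x}<\penc{y}$; under that convention the corollary asserts $\penc{x}<\penc{y}\Leftrightarrow\penc{x}<\penc{y}$ and there is nothing to prove, so the elaborate case analysis you then carry out would be beside the point. The way the corollary is actually invoked (to derive Corollary~\ref{cor:pbwt_order}, and repeatedly in the proof of Lemma~\ref{lem:hatk}, where one passes from the order of $\penc{T[\SUR_{T}(i)..]}$ versus $\penc{T[\SUR_{T}(j)..]}$ to that of $\penc{T[\SUR_{T}(i)-1..]}$ versus $\penc{T[\SUR_{T}(j)-1..]}$) shows that ``$x<y$'' is to be read as $\penc{x[2..]}<\penc{y[2..]}$, the order of the \emph{tails}. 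Under that reading your three-way case split on the tail order is exactly the right proof, and your equal-tails paragraph supplies the needed auxiliary implication $\penc{x[2..]}=\penc{y[2..]}\Rightarrow\penc{x}=\penc{y}$ that closes the equivalence. So the mechanics are correct; only the interpretation of the statement needs fixing.
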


\begin{corollary}\label{cor:order_e}
  For any p-strings $x$ and $y$ with $\fce(x) \le \lcpinf(\penc{x[2..]}, \penc{y[2..]}) < \fce(y)$, 
  it holds that $\penc{x}[..h+1] < \penc{x} < \penc{x}[..h'+1] < \penc{y}$, where
  $h = \select_{x[1]}(x[2..], 1)$ and $h' = \select_{y[1]}(y[2..], 1)$.
\end{corollary}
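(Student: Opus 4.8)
The plan is to read the hypothesis $\fce(x) \le e < \fce(y)$, with $e = \lcpinf(\penc{x[2..]}, \penc{y[2..]})$, in the regime where $\fce(x)$ and $\fce(y)$ are both positive integers, i.e. $x[1],y[1]\in\pSigma$ recur in $x[2..]$ resp. $y[2..]$ so that $h=\select_{x[1]}(x[2..],1)$ and $h'=\select_{y[1]}(y[2..],1)$ are defined; this is exactly Case~(B2) of Lemma~\ref{lem:penc_order}, the only regime in which the stated chain is meaningful (if one of $\fce(x),\fce(y)$ were an s-symbol we would be in Case~(A), where $\lcp(\penc{x},\penc{y})=0$ and the $h,h'$ machinery does not apply). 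I would use Case~(B2) as the backbone: it already yields $\penc{x} < \penc{y}$, $\lcp(\penc{x}, \penc{y}) = h$ and $\lcpinf(\penc{x}, \penc{y}) = \fce(x)$, and the remaining work is to identify the first mismatch between $\penc{x}$ and $\penc{y}$ and to locate the prefixes $\penc{x}[..h+1]$ and $\penc{y}[..h'+1]$ relative to it.

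First I would pin down positions. Writing $\lambda = \lcp(\penc{x[2..]}, \penc{y[2..]})$, the $\fce(x)$-th $\infty$ of $\penc{x[2..]}$ sits at $h = \select_{\infty}(\penc{x[2..]}, \fce(x))$, and $\fce(x) \le e$ puts this occurrence inside the common prefix, so $h \le \lambda$; symmetrically $\fce(y) > e$ forces the $\fce(y)$-th $\infty$ of $\penc{y[2..]}$ beyond the common prefix, so $h' > \lambda \ge h$. Next I would transfer information from the shifted encodings to the full ones using the leftmost-occurrence relation recalled in the preliminaries: $\penc{x}[2..]$ agrees with $\penc{x[2..]}$ except at position $h$, where $\penc{x}[h+1] = h$ while $\penc{x[2..]}[h] = \infty$, and the analogous statement holds for $y$ at $h'$. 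Because $h < h'$, reading off positions $1,\dots,h+1$ gives $\penc{x}[..h] = \penc{y}[..h]$ (with $\penc{x}[1]=\penc{y}[1]=\infty$) and, at the first mismatch, $\penc{x}[h+1] = h < \infty = \penc{y}[h+1]$, which both re-derives $\penc{x} < \penc{y}$ with $\lcp(\penc{x},\penc{y})=h$ and isolates the divergence position.

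With these in hand the chain splits into three independent links (the third term being the length-$(h'+1)$ prefix of $\penc{y}$). For $\penc{x}[..h+1] < \penc{x}$ and $\penc{y}[..h'+1] < \penc{y}$ it suffices that these are proper prefixes, i.e. that the second occurrences of $x[1]$ and $y[1]$ are interior; this holds in our setting because the strings are $\$$-terminated and the recurring p-symbols differ from the trailing $\$$, so $h+1 < |x|$ and $h'+1 < |y|$. For the central link $\penc{x} < \penc{y}[..h'+1]$, I would note that $\penc{y}[..h'+1]$ agrees with $\penc{y}$ on its first $h'+1 \ge h+1$ symbols, so the comparison of $\penc{x}$ against it first diverges at the same position $h+1$, where $\penc{x}[h+1] = h < \infty = \penc{y}[h+1]$; hence $\penc{x} < \penc{y}[..h'+1]$. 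Chaining the three links gives $\penc{x}[..h+1] < \penc{x} < \penc{y}[..h'+1] < \penc{y}$.

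The main obstacle is the bookkeeping of the single-position discrepancy between $\penc{w}$ and $\penc{w}[2..]$ versus $\penc{w[2..]}$: the hypothesis is phrased on the shifted encodings $\penc{x[2..]},\penc{y[2..]}$ while the conclusion compares the full encodings $\penc{x},\penc{y}$, and the whole argument hinges on checking that $h \le \lambda < h'$ so that the resolution of the leading p-symbol (which turns one $\infty$ into a finite distance) occurs strictly before the first mismatch on the $x$-side and strictly after it on the $y$-side. The secondary, routine obstacle is verifying that the two prefixes are proper, which reduces to the $\$$-termination convention and ensures the two outer inequalities are strict.
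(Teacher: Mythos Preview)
Your argument is correct and proceeds exactly as the paper intends: the corollary is stated without proof as an immediate consequence of Case~(B2) of Lemma~\ref{lem:penc_order}, and you have simply unpacked that case, pinning down the first mismatch at position $h+1$ via the identity $h=\select_{\infty}(\penc{x[2..]},\fce(x))\le\lambda<\select_{\infty}(\penc{y[2..]},\fce(y))=h'$.

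Two remarks. First, you silently (and rightly) corrected the third term of the chain to $\penc{y}[..h'+1]$; the printed $\penc{x}[..h'+1]$ is a typo, since that string is a prefix of $\penc{x}$ and hence cannot be strictly larger than it (e.g.\ take $x=\mathtt{p}\mathtt{p}\mathtt{s}_1\mathtt{r}\mathtt{r}$, $y=\mathtt{q}\mathtt{p}\mathtt{s}_2\mathtt{q}$ with $\mathtt{s}_1<\mathtt{s}_2$). Second, your justification of the two outer strict inequalities via $\$$-termination is the right move in the paper's setting, but note that the corollary is phrased for ``any p-strings''; without an end-marker the outer inequalities may degenerate to equalities when $h+1=|x|$ or $h'+1=|y|$. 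This does not affect how the corollary is actually used later (only the middle inequalities matter in the proof of Lemma~\ref{lem:hatk}).
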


\begin{table}[h]
  \begin{center}
    \caption{An example of $\SUR_{T}(i)$, $\LCPinf_{T}$, $\Lstr_{T}$ and $\Fstr_{T}$ for a p-string $T=\idtt{xyazyxazxza\$}$ 
    with $\sSigma = \{ \idtt{a} \}$ and $\pSigma = \{\idtt{x}, \idtt{y}, \idtt{z} \}$.}
    \label{table:arrays}
    \begin{tabular}{|c||l|l||c|c|c|c|l|}
    \hline
    \multicolumn{1}{|c||}{$i$} & \multicolumn{1}{c|}{$T[i..]$} & \multicolumn{1}{c||}{$\penc{T[i..]}$} & \multicolumn{1}{c|}{$\SUR_{T}(i)$} & \multicolumn{1}{c|}{$\LCPinf_{T}[i]$}   & \multicolumn{1}{c|}{$\Lstr_T[i]$} & \multicolumn{1}{c|}{$\Fstr_T[i]$}  & \multicolumn{1}{c|}{$\penc{T[\SUR_{T}(i)..]}$} \\ \hline
  1    & $\idtt{xyazyxazxza\$}$    & $\idtt{\infty\infty a\infty35a432a\$}$    & 12    & 0    & $\idtt{a}$     & $\idtt{\$}$    & $\$$                                     \\
  2    & $\idtt{yazyxazxza\$}$     & $\idtt{\infty a\infty3\infty a432a\$}$    & 11    & 0    & 1              & $\idtt{a}$     & $\idtt{a\$}$                             \\
  3    & $\idtt{azyxazxza\$}$      & $\idtt{a\infty\infty\infty a432a\$}$      & 7     & 0    & 2              & $\idtt{a}$     & $\idtt{a\infty\infty2a\$}$               \\
  4    & $\idtt{zyxazxza\$}$       & $\idtt{\infty\infty\infty a432a\$}$       & 3     & 2    & 2              & $\idtt{a}$     & $\idtt{a\infty\infty\infty a432a\$}$     \\
  5    & $\idtt{yxazxza\$}$        & $\idtt{\infty\infty a\infty32a\$}$        & 10    & 0    & 2              & 1              & $\idtt{\infty a\$}$                      \\
  6    & $\idtt{xazxza\$}$         & $\idtt{\infty a\infty32a\$}$              & 6     & 1    & 3              & 2              & $\idtt{\infty a\infty32a\$}$             \\
  7    & $\idtt{azxza\$}$          & $\idtt{a\infty\infty2a\$}$                & 2     & 2    & 3              & 2              & $\idtt{\infty a\infty3\infty a432a\$}$   \\
  8    & $\idtt{zxza\$}$           & $\idtt{\infty\infty2a\$}$                 & 9     & 1    & 2              & 2              & $\idtt{\infty\infty a\$}$                \\  
  9    & $\idtt{xza\$}$            & $\idtt{\infty\infty a\$}$                 & 5     & 2    & 3              & 3              & $\idtt{\infty\infty a\infty32a\$}$       \\
  10   & $\idtt{za\$}$             & $\idtt{\infty a\$}$                       & 1     & 3    & $\idtt{\$}$    & 3              & $\idtt{\infty\infty a\infty35a432a\$}$   \\
  11   & $\idtt{a\$}$              & $\idtt{a\$}$                              & 8     & 2    & $\idtt{a}$     & 2              & $\idtt{\infty\infty2a\$}$                \\
  12   & $\idtt{\$}$               & $\idtt{\$}$                               & 4     & 2    & $\idtt{a}$     & 3              & $\idtt{\infty\infty\infty a432a\$}$      \\
    \hline
    \end{tabular}
  \end{center}
\end{table}

Let $T$ be a p-string that has the smallest s-symbol $\$$ as its end-marker, i.e., 
$T[|T|] = \$$ and $\$$ does not appear anywhere else in $T$.
The suffix rank function $\SR_{T} : [1..|T|] \rightarrow [1..|T|]$ for $T$ maps
a position $i~(1 \le i \le |T|)$ to the lexicographic rank of $\penc{T[i..]}$ in $\{\penc{T[j..]} \mid 1 \le j \le |T| \}$.
Its inverse function $\SUR_{T}(i)$ returns the starting position of the lexicographically $i$-th p-encoded suffix of $T$.

The main components of \emph{parameterized Burrows-Wheeler Transform (pBWT)} of $T$ are $\Fstr_{T}$ and $\Lstr_{T}$.
$\Fstr_{T}$ (resp. $\Lstr_{T}$) is defined to be
the string of length $|T|$ such that $\Fstr_{T}[i] = \fce(T[\SUR_{T}(i)..])$ 
(resp. $\Lstr_{T}[i] = \fce(T[\SUR_{T}(i)-1..])$), where we assume that $T[0..] = \$$.
\footnote{Previous studies~\cite{2017GangulyST_PbwtAchievSuccinDataStruc_SODA,2021KimC_SimplFmIndexForParam,2022HashimotoHKYS_ComputParamBurrowWheelTrans_SPIRE} 
define pBWTs based on sorted cyclic rotations, but our suffix-based definition is 
more suitable for online construction to prevent unnecessary update on $\Fstr_{T}$ and $\Lstr_{T}$.}
Since $\{T[\SUR_{T}(i)..] \mid 1 \le i \le |T| \} = \{T[\SUR_{T}(i)-1..] \mid 1 \le i \le |T| \}$
is equivalent to the set of all non-empty suffixes of $T$, $\Fstr_{T}$ is a permutation of $\Lstr_{T}$.

The so-called LF-mapping $\LF_{T}$ maps a position $i$ to $\SR_{T}(\SUR_{T}(i) - 1)$ if $\SUR_{T}(i) > 1$,
and otherwise $\SR_{T}(|T|) = 1$.
By definition and Corollary~\ref{cor:order_same}, we have:
\begin{corollary}\label{cor:pbwt_order}
  For any p-string $T$ and any integers $i, j$ with $1 \le i < j \le |T|$,
  $\LF_{T}(i) < \LF_{T}(j)$ if $\Lstr_{T}[i] = \Lstr_{T}[j]$.
\end{corollary}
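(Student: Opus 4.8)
The plan is to convert the hypothesis $\Lstr_T[i] = \Lstr_T[j]$ into an equality of $\fce$-values on the two suffixes obtained by prepending one symbol to $T[\SUR_T(i)..]$ and $T[\SUR_T(j)..]$, and then to read the conclusion off Lemma~\ref{lem:penc_order}.

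First I would set $p = \SUR_T(i)$ and $q = \SUR_T(j)$. The strings $\penc{T[k..]}$ have pairwise distinct lengths, hence are pairwise distinct, so $\SR_T$ and $\SUR_T$ are genuine bijections and $i < j$ yields $p \neq q$ together with $\penc{T[p..]} < \penc{T[q..]}$. Before applying the lemma I would rule out the wrap-around position: the value $\$$ of $\fce$ is attained only by a suffix beginning with $\$$, and among the shifted indices $\SUR_T(k)-1 \in [0..|T|-1]$ that occur here only $\SUR_T(k)-1 = 0$ (giving $T[0..] = \$$) qualifies, so $\Lstr_T[k] = \$$ iff $\SUR_T(k) = 1$. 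If $\Lstr_T[i] = \Lstr_T[j] = \$$ this would force $p = q = 1$, contradicting $p \neq q$; hence $p, q \geq 2$ and, by definition of $\LF_T$, we have $\LF_T(i) = \SR_T(p-1)$ and $\LF_T(j) = \SR_T(q-1)$. It thus suffices to prove $\penc{T[(p-1)..]} < \penc{T[(q-1)..]}$.

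Now put $x = T[(p-1)..]$ and $y = T[(q-1)..]$, so that $x[2..] = T[p..]$ and $y[2..] = T[q..]$; therefore $\penc{x[2..]} < \penc{y[2..]}$, and $\fce(x) = \Lstr_T[i] = \Lstr_T[j] = \fce(y)$. This is exactly the setting of Lemma~\ref{lem:penc_order} under the extra hypothesis $\fce(x) = \fce(y)$, which is covered by cases (A2), (B1) and (B4) of Table~\ref{table:cases} --- these exhaust $\fce(x) = \fce(y)$, according to whether the common value is an s-symbol and, if not, whether it exceeds $e = \lcpinf(\penc{x[2..]}, \penc{y[2..]})$ --- and in every one of them the lexicographic verdict is $\penc{x} < \penc{y}$. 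Equivalently, since $\fce(x) = \fce(y)$, Corollary~\ref{cor:order_same} reduces the goal to $x < y$, which the same case split delivers. Either way $\penc{T[(p-1)..]} < \penc{T[(q-1)..]}$, i.e.\ $\LF_T(i) = \SR_T(p-1) < \SR_T(q-1) = \LF_T(j)$.

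I expect no serious obstacle: the one-line inference ``prepending the same $\fce$-class preserves the order of the p-encodings'' is a direct table lookup in Lemma~\ref{lem:penc_order}. The only care needed is the bookkeeping around the boundary --- isolating the unique position with $\SUR_T = 1$ so that $\LF_T$ is honestly $\SR_T$ precomposed with the shift $k \mapsto k-1$ --- together with the observation that ``same $\Lstr$ value'' is literally the hypothesis $\fce(x) = \fce(y)$ of Lemma~\ref{lem:penc_order} once one spells out $x$ and $y$.
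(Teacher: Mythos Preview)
Your proposal is correct and follows the same route the paper indicates (``By definition and Corollary~\ref{cor:order_same}''): you unfold $\LF_T$, exclude the unique wrap-around position via the $\$$-value of $\Lstr_T$, and then read off order preservation from the $\fce(x)=\fce(y)$ cases (A2), (B1), (B4) of Lemma~\ref{lem:penc_order}, which is precisely the content of Corollary~\ref{cor:order_same}. The only cosmetic point is that your ``Equivalently, Corollary~\ref{cor:order_same} reduces the goal to $x<y$'' leans on the somewhat informal wording of that corollary; your direct appeal to Lemma~\ref{lem:penc_order} already suffices, so you can drop that alternative without loss.
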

Thanks to Corollary~\ref{cor:pbwt_order}, it holds that 
$\LF_{T}(i) = \select_{c}(\Fstr_{T}, \rank_{c}(\Lstr_{T}, i))$, where $c = \Lstr_{T}[i]$.
The inverse function $\FL_{T}$ of $\LF_{T}$ can be computed by 
$\FL_{T}(i) = \select_{c}(\Lstr_{T}, \rank_{c}(\Fstr_{T}, i))$, where $c = \Fstr_{T}[i]$.

Let $\LCPinf_{T}$ be the string of length $|T|$ such that
$\LCPinf_{T}[0] = 0$ and $\LCPinf_{T}[i] = \lcpinf(T[\SUR_{T}(i-1)..], T[\SUR_{T}(i)..])$ for any $1 < i \le |T|$.
An example of all explained arrays is given in Table~\ref{table:arrays}.

\section{Online construction algorithm}\label{sec:algorithm}
For online construction of our index for p-matching, we consider maintaining dynamic data structures for 
$\Fstr_{T}$, $\Lstr_{T}$ and $\LCPinf_{T}$ while prepending a symbol to the current p-string $T$.
The details of the data structures will be presented in Subsection~\ref{sec:analysis}.
In what follows, we focus on a single step of updating $T$ to $\hat{T} = cT$ for some symbol $c$ in $\sSigma \cup \pSigma$.
Note that $\Fstr_{T}$, $\Lstr_{T}$ and $\LCPinf_{T}$ are strongly related to the sorted p-encoded suffixes of a p-string
and $\hat{T} = cT$ is the only suffix that was not in the suffixes of $T$.
Let $k = \SR_{T}(1)$ and $\hat{k} = \SR_{\hat{T}}(1)$.
In order to deal with the new emerging suffix $\hat{T}$, 
we compute the lexicographic rank $\hat{k}$ of $\penc{\hat{T}}$ in the non-empty p-encoded suffixes of $\hat{T}$.
Then $\Fstr_{\hat{T}}$ and $\Lstr_{\hat{T}}$ can be obtained by 
replacing $\$$ in $\Lstr_{T}$ at $k$ by $\fce(\hat{T})$
and inserting $\$$ and $\fce(\hat{T})$ into the $\hat{k}$-th position of $\Lstr_{T}$ and $\Fstr_{T}$, respectively.
In Subsection~\ref{sec:khat}, we propose our algorithm to compute $\hat{k}$.
For updating $\LCPinf$, we have to compute the $\lcpinf$-values for $\penc{\hat{T}}$ 
with its lexicographically adjacent p-encoded suffixes,
which will be treated in Subsection~\ref{sec:lcpinf}.

\subsection{How to compute $\hat{k}$}\label{sec:khat}
Unlike the existing work~\cite{2022HashimotoHKYS_ComputParamBurrowWheelTrans_SPIRE} 
that computes $\hat{k}$ by counting the number of 
p-encoded suffixes that are lexicographically smaller than $\penc{\hat{T}}$, 
we get $\hat{k}$ indirectly by computing the rank of 
a lexicographically closest (smaller or larger) p-encoded suffix to $\penc{\hat{T}}$.
The lexicographically smaller (resp. larger) closest element in $\{ \penc{T[i..]} \mid 1 \le i \le |T| \}$ to $\penc{\hat{T}}$
is called the \emph{p-pred} (resp. \emph{p-succ}) of $\penc{\hat{T}}$.
and its rank is denoted by $k_{-}$ (resp. $k_{+}$).
Then it holds that $\hat{k} = k_{+} = k_{-} + 1$.

We start with the easy case that the prepended symbol $c$ is an s-symbol.
\begin{lemma}\label{lem:add_s-symbol}
  Let $\hat{T} = cT$ be a p-string with $c \in \sSigma$.
  If $p := \LFNQ_{c}(\Lstr_{T}, k)$ exists, 
  the rank $k_{-}$ of the p-pred of $\hat{T}$ is $\LF_{T}(p)$.
  Otherwise, $k_{-} = \select_{b}(\Fstr_{T}, \rank_{b}(\Fstr_{T}, |T|))$, where
  $b$ is the largest s-symbol that appears in $T$ and smaller than $c$.
\end{lemma}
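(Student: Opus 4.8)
The plan is to analyze, for an arbitrary suffix $T[i..]$ of $T$, whether $\penc{T[i..]} < \penc{\hat{T}}$, and then locate the lexicographically largest such suffix. Since $c \in \sSigma$, we have $\penc{\hat{T}}[1] = c \in \sSigma$; and $\penc{\hat{T}}[2..] = \penc{T}$ because p-encoding is prefix-consistent in reverse for an s-symbol prepend (the differing-position phenomenon only happens when prepending a p-symbol that already occurs). So comparing $\penc{\hat{T}}$ with $\penc{T[i..]}$ is a comparison governed by Case~(A1)/(A2) of Lemma~\ref{lem:penc_order} with $x = \hat{T}$, $y = T[i..]$ (or vice versa): the first symbols $\fce(\hat{T}) = c$ and $\fce(T[i..])$ are compared first. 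First I would split into two groups of suffixes: those $T[i..]$ with $\fce(T[i..]) \in \sSigma$, where the comparison reduces to comparing the s-symbols $c$ and $T[i..][1]=T[i]$ (by Case~(A1) when they differ) or to comparing $\penc{T}$ with $\penc{T[i+1..]}$ when $T[i]=c$ (Case~(A2), i.e.\ Corollary~\ref{cor:order_same}); and those with $\fce(T[i..]) \in [1..|T|_p]$, which by Case~(A1) is always lexicographically larger than $\penc{\hat T}$ since every s-symbol (in particular $c$) is smaller than any element of $\N_\infty$.

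Next I would handle the main case, when $p := \LFNQ_{c}(\Lstr_{T}, k)$ exists. Recall $\Lstr_{T}[j] = \fce(T[\SUR_{T}(j)-1..])$, so $\Lstr_{T}[j] = c$ exactly when the suffix one position to the left of the rank-$j$ suffix, namely $T[\SUR_T(j)-1..]$, starts with the s-symbol $c$. Among all p-encoded suffixes starting with $c$ (equivalently all $j$ with $\Lstr_T[j]=c$), Corollary~\ref{cor:order_same} tells us their lexicographic order coincides with the order of their length-reduced counterparts $\penc{T[\SUR_T(j)..]}$, i.e.\ with the order of $j$ itself; and $\penc{\hat T}$, which also starts with $c$ and whose tail is $\penc{T}$ sitting at rank $k=\SR_T(1)$, slots in among these according to where $k$ falls. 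Hence the p-pred of $\hat{T}$ among suffixes starting with $c$ is $T[\SUR_T(p)-1..]$ where $p$ is the largest rank $\le k$ with $\Lstr_T[p]=c$ — that is $p=\LFNQ_c(\Lstr_T,k)$ — and its rank is $\SR_T(\SUR_T(p)-1) = \LF_T(p)$. I must also argue no suffix starting with an s-symbol strictly larger than $c$ can be the p-pred (it is larger than $\penc{\hat T}$ by Case~(A1)), and that any suffix $T[i..]$ starting with an s-symbol $b'<c$ has rank $\le \LF_T(p)$: this is exactly Corollary~\ref{cor:pbwt_order}-type reasoning, since such a suffix equals $T[\SUR_T(j')-1..]$ for some $j'$ with $\Lstr_T[j']=b'<c$, and $\LF_T$ restricted to a fixed $\Lstr$-value is increasing while blocks for smaller symbols precede blocks for larger ones in $\Fstr_T$ — actually I need the sharper statement that $\LF_T(p)$ is the global maximum rank over all suffixes $\le_{\text{lex}} \penc{\hat T}$, which follows because $\penc{\hat T}$'s rank $\hat k$ satisfies $\hat k = k_-+1$ and everything of rank $< \hat k$ is a genuine suffix of $T$.

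Finally, the else case: if no suffix of $T$ starts with $c$, then the p-pred must start with the largest s-symbol $b<c$ actually occurring in $T$ (if such $b$ exists; otherwise $k_-$ is undefined and $\hat k=1$, which I should note the statement implicitly handles). All suffixes starting with that $b$ occupy a contiguous block in the sorted order — indeed the $\Fstr_T$-block of value $b$ — and the last of them, being lexicographically the largest suffix with first symbol $b$ and with nothing between first-symbol $b$ and first-symbol $c$, is the p-pred; its rank is the last position of the $b$-block in $\Fstr_T$, namely $\select_{b}(\Fstr_T, \rank_b(\Fstr_T, |T|))$. I expect the main obstacle to be the careful bookkeeping in the main case: verifying that $\LF_T(p)$ is not merely the largest rank among suffixes starting with $c$ that are $<\penc{\hat T}$, but the largest rank among \emph{all} p-encoded suffixes that are $<\penc{\hat T}$ — i.e.\ that no suffix starting with a smaller s-symbol or with a smaller p-encoded tail sneaks above it. This will hinge on combining Corollary~\ref{cor:order_same}, Corollary~\ref{cor:pbwt_order}, and the fact that $\SUR_T\circ\LF_T = (\text{shift by }-1)$ respects the block structure of $\Fstr_T$, which I would state as an auxiliary observation before assembling the two cases.
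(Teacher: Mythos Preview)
Your proposal is correct and follows essentially the same approach as the paper: use Case~(A2) of Lemma~\ref{lem:penc_order} (equivalently Corollary~\ref{cor:order_same}) to argue that among suffixes starting with $c$ the order is inherited from their tails, so the largest one below $\penc{\hat T}$ corresponds to the largest $j\le k$ with $\Lstr_T[j]=c$; then handle the else case via the $b$-block of $\Fstr_T$. The ``main obstacle'' you flag is real but easy, and the paper simply glosses over it: any suffix whose first p-encoded symbol is strictly less than $c$ is, by Case~(A1), smaller than every suffix starting with $c$, hence in particular smaller than $\penc{T[\SUR_T(p)-1..]}$, so it cannot sneak between that suffix and $\penc{\hat T}$.
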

\begin{proof}
  By Case (A2) of Lemma~\ref{lem:penc_order} (cf.~Table~\ref{table:cases}),  
  the lexicographic order of p-encoded suffixes starting with $c$
  does not change by removing their first characters, which are all $c$.
  If $p$ exists, 
  $\penc{T[\SUR_{T}(p)..]}$ is the lexicographically smaller closest p-encoded suffix
  to $\penc{T}$ that is preceded by $c$.
  Hence, $\penc{T[\SUR_{T}(\LF_{T}(p))..]} = \penc{c T[\SUR_{T}(p)..]}$
  is the p-pred of $\penc{cT} = \penc{\hat{T}}$, which means that
  $k_{-} = \LF_{T}(p)$.

  If $p$ does not exist, it implies that $\penc{\hat{T}}$ is the 
  lexicographically smallest p-encoded suffix that starts with $c$.
  Since $\penc{\hat{T}}$ lexicographically comes right after 
  the p-encoded suffixes starting with an s-symbol smaller than $c$,
  $k_{-}$ is the last occurrence of $b$ in $\Fstr_{T}$,
  that is, $k_{-} = \select_{b}(\Fstr_{T}, \rank_{b}(\Fstr_{T}, |T|))$.
\end{proof}

In the rest of this subsection, we consider the case that $c$ is a p-symbol.
If $T$ contains no p-symbol, it is clear that $k_{-} = |T|$.
Hence, in what follows, we assume that there is a p-symbol in $T$.

Since $\penc{\hat{T}}$ has the longest $\lcp$-value $\hat{\lcpmax}$ with its p-pred or p-succ,
we search for such p-encoded suffixes of $T$ using the following lemmas to leverage the information stored in $\LCPinf_{T}$.

\begin{lemma}\label{lem:lcp_rmq}
  Given two positions $i$ and $j$ with $1 \le i < j \le |T|$,
  $\lcpinf(\penc{T[\SUR_{T}(i)..]}, \penc{T[\SUR_{T}(j)..]}) = \RmQ_{\LCPinf_{T}}(i+1, j)$.
\end{lemma}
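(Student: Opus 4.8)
\textbf{Proof proposal.} Write $s_\ell := \penc{T[\SUR_{T}(\ell)..]}$ for $1 \le \ell \le |T|$, so that $s_1 < s_2 < \dots < s_{|T|}$ is the sorted list of p-encoded non-empty suffixes of $T$, and $\LCPinf_{T}[\ell] = \lcpinf(s_{\ell-1}, s_\ell)$ for $2 \le \ell \le |T|$ by definition. The claim is that $\lcpinf(s_i, s_j)$ equals $\min_{i < \ell \le j} \LCPinf_{T}[\ell]$, which is exactly the value located by $\RmQ_{\LCPinf_{T}}(i+1, j)$. The plan is to first isolate a monotonicity property of $\lcpinf$ along the lexicographic order, and then telescope it over the chain $s_i \le s_{i+1} \le \dots \le s_j$.

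\emph{Sub-claim:} for any three p-encoded strings $a \le b \le c$ over $(\N_{\infty} \cup \sSigma)$, we have $\lcpinf(a, c) = \min\{ \lcpinf(a, b), \lcpinf(b, c) \}$. The starting point is the standard lexicographic identity $\lcp(a, c) = \min\{ \lcp(a, b), \lcp(b, c) \}$: since $b$ lies lexicographically between $a$ and $c$, it must agree with both of them on the first $\lcp(a,c)$ positions, and the branching symbol of $b$ at position $\lcp(a,c)+1$ (if it exists) is squeezed between those of $a$ and $c$, so $b$ cannot separate from both $a$ and $c$ strictly later than position $\lcp(a,c)+1$. Hence the longest common prefix of $a$ and $c$ is literally an initial segment of the longest common prefix of $a$ and $b$, and also of that of $b$ and $c$, and it coincides with whichever of those two is the shorter. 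Since the number of $\infty$'s occurring in a prefix of a fixed string is nondecreasing in the prefix length, passing from $\lcp$ to $\lcpinf$ (counting only the $\infty$ symbols) turns this coincidence into the asserted $\min$-identity.

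Applying the sub-claim to $a = s_i$, $b = s_{j-1}$, $c = s_j$ and inducting on $j - i$ (the base case $j = i+1$ being the definition of $\LCPinf_{T}[i+1]$) yields $\lcpinf(s_i, s_j) = \min_{i < \ell \le j} \lcpinf(s_{\ell-1}, s_\ell) = \min_{i < \ell \le j} \LCPinf_{T}[\ell]$, which is the statement of the lemma. The only step that needs care is the sub-claim, precisely the verification that the common prefix of the two extreme strings is a genuine initial segment of the common prefix of each adjacent pair; this is what upgrades the passage from $\lcp$ to $\lcpinf$ from a mere inequality to the exact $\min$-identity. Everything else is routine telescoping together with the definition of $\LCPinf_{T}$.
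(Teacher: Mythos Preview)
Your proof is correct and follows essentially the same route as the paper: establish the three-string identity $\lcpinf(a,c)=\min\{\lcpinf(a,b),\lcpinf(b,c)\}$ for $a\le b\le c$ via the classical $\lcp$ identity, then telescope over the sorted list of p-encoded suffixes. The paper's proof states the passage from $\lcp$ to $\lcpinf$ without justification (``and thus''), whereas you spell out the reason---that the shorter of the two adjacent common prefixes literally coincides with the common prefix of the extreme pair, so counting $\infty$'s preserves the $\min$---which is a welcome addition.
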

\begin{proof}
  It is not difficult to see that $\lcp(x, z) = \min\{\lcp(x, y), \lcp(y, z)\}$
  for any strings $x < y < z$, and thus, $\lcpinf(x, z) = \min\{\lcpinf(x, y), \lcpinf(y, z)\}$.
  Since $\LCPinf$ holds the $\lcpinf$-values of lexicographically adjacent p-encoded suffixes,
  we get $\lcpinf(\penc{T[\SUR_{T}(i)..]}, \penc{T[\SUR_{T}(j)..]}) = \min \{ \LCPinf_{T}[g] \}_{g = i+1}^{j} = \RmQ_{\LCPinf_{T}}(i+1, j)$
  by applying the previous argument successively.
\end{proof}

\begin{lemma}\label{lem:getinterval}
  Algorithm~\ref{algo:getinterval} correctly computes the maximal interval $[l..r]$ such that
  $\lcpinf(\penc{T[\SUR_{T}(i)..]}, \penc{T[\SUR_{T}(j)..]}) \ge e$ for any $j \in [l..r]$.
\end{lemma}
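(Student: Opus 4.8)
The plan is to analyze Algorithm~\ref{algo:getinterval} (not shown in the excerpt, but its specification is clear from the statement): it takes a position $i$ and a threshold $e$, and should return the maximal contiguous block $[l..r]$ of lexicographic ranks that contains $i$ and for which every member $j$ satisfies $\lcpinf(\penc{T[\SUR_{T}(i)..]}, \penc{T[\SUR_{T}(j)..]}) \ge e$. The key observation driving the proof is Lemma~\ref{lem:lcp_rmq}, which reduces the pairwise $\lcpinf$-value between the $i$-th and $j$-th p-encoded suffixes to a range-minimum query over $\LCPinf_{T}$ on the interval between them. Concretely, for $j > i$ we have $\lcpinf(\penc{T[\SUR_{T}(i)..]},\penc{T[\SUR_{T}(j)..]}) = \RmQ_{\LCPinf_{T}}(i+1, j)$, and symmetrically for $j < i$ it equals $\RmQ_{\LCPinf_{T}}(j+1, i)$.

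First I would establish the structural fact that the set $\{ j : \lcpinf(\penc{T[\SUR_{T}(i)..]}, \penc{T[\SUR_{T}(j)..]}) \ge e \}$ is indeed an interval containing $i$. This is because the function $j \mapsto \min_{i+1 \le g \le j} \LCPinf_{T}[g]$ is non-increasing as $j$ moves away from $i$ to the right, and likewise $j \mapsto \min_{j+1 \le g \le i} \LCPinf_{T}[g]$ is non-increasing as $j$ moves away from $i$ to the left; hence the $j$ satisfying the $\ge e$ condition form a maximal run $[l..r]$ around $i$. The right endpoint $r$ is characterized as the largest $r \ge i$ with $\RmQ_{\LCPinf_{T}}(i+1, r) \ge e$, equivalently $r = \big(\RFNQ_{< e}(\LCPinf_{T}, i+1)\big) - 1$ when that find-next query succeeds, and $r = |T|$ otherwise; symmetrically $l = \LFNQ_{< e}(\LCPinf_{T}, i)$ when it succeeds (the position of the last entry below $e$ at or before index $i$, which becomes the first excluded rank from the left — note the index alignment since $\LCPinf_{T}[g]$ compares ranks $g-1$ and $g$), and $l = 1$ otherwise. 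I would then match these closed-form characterizations against the steps of Algorithm~\ref{algo:getinterval}, arguing by the monotonicity above that the algorithm's search (whether it performs the $\LFNQ$/$\RFNQ$ queries directly, as Lemma~\ref{lem:dwm} permits, or walks outward via repeated $\RmQ$) terminates exactly at $l$ and $r$.

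The verification splits into two symmetric directions (extending left from $i$, extending right from $i$), plus the boundary cases where no entry below $e$ exists on one side, so that the interval runs to rank $1$ or rank $|T|$; each of these is a short argument once the $\RmQ$-to-$\lcpinf$ correspondence of Lemma~\ref{lem:lcp_rmq} is in place. I would also note the edge case $e = 0$, for which the condition is vacuous and the algorithm should return $[1..|T|]$, consistent with the fact that $\LCPinf_{T}[g] \ge 0$ always. The main obstacle I anticipate is purely bookkeeping: getting the off-by-one indices right between lexicographic ranks and positions in $\LCPinf_{T}$ — since $\LCPinf_{T}[g]$ records the $\lcpinf$ of the ranks $g-1$ and $g$, the $\RmQ$ window for the pair $(i,j)$ with $i<j$ is $[i+1..j]$, not $[i..j]$, and the translation of ``first rank violating $\ge e$'' into a $\LFNQ_{<e}$/$\RFNQ_{<e}$ answer must be done carefully so that the reported $[l..r]$ is exactly maximal and contains $i$. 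There is no deep difficulty beyond this; the correctness is essentially forced once monotonicity of prefix/suffix minima of $\LCPinf_{T}$ and Lemma~\ref{lem:lcp_rmq} are combined.
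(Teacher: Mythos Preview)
Your proposal is correct and matches the paper's approach exactly: the paper states Algorithm~\ref{algo:getinterval} precisely as you reconstructed it, namely $l \leftarrow \max\{1, \LFNQ_{< e}(\LCPinf_{T}, i)\}$ and $r \leftarrow \min\{|T|, \RFNQ_{< e}(\LCPinf_{T}, i+1) - 1\}$, and leaves the lemma without an explicit proof, treating it as immediate from Lemma~\ref{lem:lcp_rmq}. Your argument via monotonicity of the running minima of $\LCPinf_{T}$ and the off-by-one analysis (in particular that $\LFNQ_{<e}(\LCPinf_{T}, i)$ itself, not that value plus one, is the correct left endpoint because $\LCPinf_{T}[g]$ compares ranks $g-1$ and $g$) is exactly the justification the paper omits.
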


\begin{algorithm2e}[t]
\caption{Algorithm to compute the maximal interval $[l..r]$ such that
$\lcpinf(\penc{T[\SUR_{T}(i)..]}, \penc{T[\SUR_{T}(j)..]}) \ge e$ for any $j \in [l..r]$.}
\label{algo:getinterval}
\SetKw{Return}{return}
\SetKwFunction{GetInterval}{GetMI}
\SetKwProg{Fn}{Function}{:}{}
\Fn{\GetInterval{$i$, $e$}}
{
  $l \leftarrow \max\{1, \LFNQ_{< e}(\LCPinf_{T}, i)\}$\;
  $r \leftarrow \min\{|T|, \RFNQ_{< e}(\LCPinf_{T}, i+1) - 1\}$\;
  \Return $[l..r]$\;
}
\end{algorithm2e}

\begin{lemma}\label{lem:hatk}
  Algorithm~\ref{algo:hatk} correctly returns $\hat{k}$.
\end{lemma}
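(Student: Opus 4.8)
The plan is to verify Algorithm~\ref{algo:hatk} by a branch-by-branch reduction to the case table of Lemma~\ref{lem:penc_order}. I would first clear the easy branches. If $c \in \sSigma$, correctness is exactly Lemma~\ref{lem:add_s-symbol} together with the identity $\hat{k} = k_{-}+1$, which holds because the $|T|+1$ p-encoded suffixes of $\hat{T}$ have pairwise distinct lengths, hence are pairwise distinct, so $\hat{k}$ is one plus the number of them that are lexicographically smaller than $\penc{\hat{T}}$. If $c \in \pSigma$ but $T$ contains no p-symbol, then $\penc{\hat{T}}$ begins with $\infty$ while every p-encoded suffix of $T$ is a string of s-symbols, so $\penc{\hat{T}}$ is the largest and $\hat{k} = |T|+1$. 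The substance is the case $c \in \pSigma$ with a p-symbol occurring in $T$; there it suffices to show that the algorithm outputs the rank $k_{-}$ of the \emph{p-pred} (equivalently $k_{-}+1 = k_{+}$, the rank of the \emph{p-succ}, in the mirrored branch) of $\penc{\hat{T}}$ among the suffixes of $T$.

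For that case, the key point is that $\penc{\hat{T}[2..]} = \penc{T}$ by prefix-consistency, so every comparison of $\penc{\hat{T}}$ with a suffix $\penc{T[i..]}$ is an instance of Lemma~\ref{lem:penc_order} whose ``tails'' are $\penc{T}$, of rank $k = \SR_{T}(1)$, and $\penc{T[i+1..]}$, of rank $j = \SR_{T}(i+1)$; the quantity $e$ there equals $\lcpinf(\penc{T},\penc{T[i+1..]})$, which by Lemma~\ref{lem:lcp_rmq} is the range minimum $\RmQ_{\LCPinf_{T}}$ over the rank interval between $k$ and $j$, while the $\fce$-values involved are $\fce(\hat{T})$ and $\Lstr_{T}[j] = \fce(T[i..])$. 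I would run $\GetInterval(k,\fce(\hat{T}))$ (Lemma~\ref{lem:getinterval}) to obtain the maximal interval $[l..r] \ni k$ of stripped-ranks $j$ with $e \ge \fce(\hat{T})$, and then, case by case of Table~\ref{table:cases}, decide whether $\penc{T[i..]} < \penc{\hat{T}}$: (a) s-starting suffixes are all smaller (Case~A, since $\fce(\hat{T})$ is a number and s-symbols precede $\N_{\infty}$), contributing a block computed from $\Fstr_{T}$; (b) for p-starting suffixes with $j \in [l..r]$ we have $e \ge \fce(\hat{T})$, so Cases~B1--B3 together with Corollary~\ref{cor:order_same} reduce the comparison to $\Lstr_{T}[j]$ versus $\fce(\hat{T})$, broken on equality by $j$ versus $k$ via Corollary~\ref{cor:pbwt_order}; (c) for p-starting suffixes with $j \notin [l..r]$ we have $e < \fce(\hat{T})$, so by Cases~B2--B4 those with $j < l$ are all smaller, those with $j > r$ are smaller iff $\Lstr_{T}[j] \le e$, and Corollary~\ref{cor:order_e} is used to localize the relevant boundary (a fixed truncation of $\penc{\hat{T}}$, of length governed by $\select_{c}(T,1)$) so that the extreme contributing suffix on that side is reached by an $\LF_{T}$-composed $\LFNQ$/$\RFNQ$ step on $\Lstr_{T}$. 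Summing (a)--(c) yields $k_{-}$, and I would match each summand to the corresponding primitive that Algorithm~\ref{algo:hatk} evaluates ($\rank$/$\select$ on $\Fstr_{T}$ and $\Lstr_{T}$, $\LFNQ$/$\RFNQ$ on $\Lstr_{T}$, $\LF_{T}$, $\GetInterval$). The mirrored branch, in which the algorithm instead reports $k_{+}$, is symmetric: swap $\LFNQ \leftrightarrow \RFNQ$ and the predicates ``$<$'' and ``$\ge$''.

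The hard part is the bookkeeping in (b) and (c). The criterion ``$\penc{T[i..]} < \penc{\hat{T}}$'' behaves differently for stripped-rank below versus above $k$, so the p-pred and p-succ sides must be argued separately; and within a side several families of suffixes sit below $\penc{\hat{T}}$ (those with $\fce$-value equal to $\fce(\hat{T})$ inside the LCP-interval, those with smaller $\fce$-value, and those outside the interval), so one must determine which family supplies the lexicographically closest suffix to $\penc{\hat{T}}$ -- this is exactly where the length column of Table~\ref{table:cases} (the $\lcpmax+1$ versus $h$ versus $h'$ distinction) and Corollary~\ref{cor:order_e} are needed, as that closest suffix may be reached only by leaving the LCP-interval. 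A further subtlety is that on the ``$j > r$'' side the relevant quantities $e$ are $\RmQ$'s over $\LCPinf_{T}$, hence not monotone in $j$, so a naive count fails and the collapse of the boundary to a single $\LF_{T}$-image is essential. Beyond that one has to line up the strict/non-strict thresholds of $\GetInterval$ (which uses $\LFNQ_{<e}$, $\RFNQ_{<e}$) with the ``$\le e$'' of Case~B1 versus the strict inequalities of Cases~B2--B4, and dispatch the degenerate situations: $c$ not occurring in $T$ (then $\fce(\hat{T}) = |T|_{p}+1$ and $\GetInterval(k,\fce(\hat{T}))$ collapses to $\{k\}$), $k$ at an extreme rank (covered by the $\max\{1,\cdot\}$/$\min\{|T|,\cdot\}$ clamps in $\GetInterval$), and $i = |T|$ whose stripped suffix is empty. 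With those settled, what remains is a finite check against Table~\ref{table:cases}.
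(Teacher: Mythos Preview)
Your plan misreads Algorithm~\ref{algo:hatk}. The algorithm is not a counting procedure: it is a \emph{descending loop} over $e$, from $\min\{\fce(\hat{T}),\max\{\LCPinf_T[k],\LCPinf_T[k+1]\}\}$ down to $1$, recomputing $[l..r]=\GetInterval(k,e)$ in every round and returning the moment one of several $\LFNQ$/$\RFNQ$ probes on $\Lstr_T$ hits; there is no summation of contributions, and $\rank$/$\select$ on $\Fstr_T$ are not invoked except inside $\LF_T$. The paper's proof follows this loop explicitly: it sets $h_e=\select_\infty(\penc{T},e)$ and $\lcpmax=\max_i\lcp(\penc{\hat{T}},\penc{T[i..]})$, argues that round $e$ detects whether $\lcpmax\in[h_e+1..h_{e+1}]$, and maintains the invariant on entering the else-branch at Line~\ref{hatk:e} that every proper suffix $T[i..]$ already satisfies $\lcpinf(\penc{T},\penc{T[i+1..]})\le e$ or $\fce(T[i..])\le e$. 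Each return line is then justified separately under that invariant, including the nested calls $\GetInterval(q,e+1)$ and the secondary probes $q'$ at Lines~\ref{hatk:e_pred_mi}--\ref{hatk:e_succ_qp}. Your single call $\GetInterval(k,\fce(\hat{T}))$ followed by a count over classes (a)--(c) is essentially the strategy of the prior work that the paper explicitly departs from (first paragraph of Subsection~\ref{sec:khat}); even carried out perfectly, it would certify a different algorithm, not this one.

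Concretely, your plan never addresses: why the loop's starting value of $e$ is large enough; why, when all three tests in a round fail, it is sound to decrement $e$; and why, at Lines~\ref{hatk:e_pred}--\ref{hatk:e_succ}, the $\LF_T$-image of the first probe position $q$ is only a preliminary landing point that must then be refined via a further $\GetInterval(q,e+1)$ and a second probe $q'$ (because extended suffixes with $\Lstr_T[\cdot]>e$ need not preserve the lexicographic order of their tails, so the tail closest to $\penc{T}$ does not in general extend to the suffix closest to $\penc{\hat{T}}$). These are the substantive steps of the correctness argument, and they live entirely inside the loop structure you do not mention. Also note that Algorithm~\ref{algo:hatk} is invoked only when $c\in\pSigma$; your opening discussion of $c\in\sSigma$ concerns Lemma~\ref{lem:add_s-symbol}, not this lemma.
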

\begin{proof}
  Let $h_{i} = \select_{\infty}(\penc{T}, i)$ for any $1 \le i \le \min\{ |T|_{p}, \fce(\hat{T})\}$, 
  and $h_{i} = |T|+1$ for any $i > \min\{ |T|_{p}, \fce(\hat{T})\}$.
  Also let $\lcpmax = \max \{ \lcp(\penc{\hat{T}}, \penc{T[i..]}) \mid 1 \le i \le |T| \}$.
  Although Algorithm~\ref{algo:hatk} does not intend to compute the exact value of $\lcpmax$, 
  it checks if $\lcpmax$ falls in $[h_{e}..h_{e+1}]$ in decreasing order of $e$
  starting from $\min\{ \fce(\hat{T}), \max\{\LCPinf_{T}[k], \LCPinf_{T}[k+1]\} \}$.
  One of the necessary conditions to have $\lcp(\penc{\hat{T}}, \penc{T[i..]}) > h_{e}$
  is that $\lcp(\penc{T}, \penc{T[i+1..]}) \ge h_{e}$, or equivalently $\lcpinf(\penc{T}, \penc{T[i+1..]}) \ge e$.
  Line~\ref{hatk:lr} computes the maximal interval $[l..r]$ that represents the ranks of the p-encoded suffixes 
  having an $\lcpinf$-value larger than or equal to $e$.
  The basic idea is to find a p-encoded suffix in $\{ \penc{T[\SUR_{T}(p)..]} \}_{p = l}^{r}$
  that comes closest to $\penc{\hat{T}}$ when extended by adding its preceding symbol.
  When $e$ comes down to the point with $\lcpmax \in [h_{e}+1..h_{e+1}]$, 
  $\hat{k}$ is returned in one of the if-then-blocks at 
  Lines~\ref{hatk:fce_pred},~\ref{hatk:fce_succ},~\ref{hatk:e_pred} and~\ref{hatk:e_succ}.

  If $\lcp(\penc{\hat{T}}, \penc{T[i..]}) = h_{\hat{e}}$ for an integer $\hat{e}$, there are two possible scenarios:
  \begin{enumerate}
    \item $\lcpinf(\penc{T}, \penc{T[i+1..]}) \ge \hat{e}$ and either $\fce(\hat{T})$ or $\fce(T[i..])$ is $\hat{e}$, and
    \item $\lcp(\penc{T}, \penc{T[i+1..]}) = h_{\hat{e}} - 1$ and both $\fce(\hat{T})$ and $\fce(T[i..])$ are at least $\hat{e}$.
  \end{enumerate}
  The former p-encoded suffix is processed in one of the if-then-blocks at Lines~\ref{hatk:fce_at} and~\ref{hatk:e_at} when $e = \hat{e}$,
  while the latter at Lines~\ref{hatk:fce_pred},~\ref{hatk:fce_succ},~\ref{hatk:e_pred} and~\ref{hatk:e_succ}
  when $e = \hat{e} - 1$.
  Note that the former is never farther from $\penc{\hat{T}}$ than the latter because 
  the lexicographic order between $\penc{\hat{T}}$ and $\penc{T[i..]}$ is determined by $\infty$ and $h_{\hat{e}}$ at $h_{\hat{e}}+1$
  in the former case, while it is by $\infty$ and something smaller than $h_{\hat{e}}$ in the latter case.
  Since Algorithm~\ref{algo:hatk} processes the former case first, it guarantees that the algorithm finds the closer one first.

  The case with $e = \fce(\hat{T})$ is treated differently than other cases in the if-then-block at Line~\ref{hatk:fce}
  since $h_{\fce(\hat{T})}$ is the unique position where $\penc{T}[h_{\fce(\hat{T})}] = \infty$ turns into $\penc{\hat{T}}[h_{\fce(\hat{T})}+1] = h_{\fce(\hat{T})}$.
  For a p-encoded suffix $\penc{T[\SUR_{T}(q')..]} \in \{ \penc{T[\SUR_{T}(p)..]} \}_{p = l}^{r}$,
  having $\Lstr_{T}[q'] = \fce(\hat{T})$ is necessary and sufficient 
  for its extended suffix $\penc{T[\SUR_{T}(q')-1..]}$ to have an $\lcp$-value larger than $h_{\fce(\hat{T})}$ with $\hat{T}$.
  By Corollary~\ref{cor:order_same}, p-encoded suffixes satisfying this condition must preserve their lexicographic order after extension,
  and hence, it is enough to search for the closest one ($q \leftarrow \LFNQ_{e}(\Lstr_{T}, k)$ or $q \leftarrow \RFNQ_{e}(\Lstr_{T}, k)$) to $\penc{T}$ 
  and compute the rank of its extended suffix by $\LF_{T}(q)$.
  If Lines~\ref{hatk:fce_pred} and~\ref{hatk:fce_succ} fail to return a value, it means that $\lcpmax \le h_{\fce(\hat{T})}$.
  The if-block at Line~\ref{hatk:fce_at} checks if there exists a p-encoded suffix $\penc{T[i+1..]}$ that satisfies the former condition 
  to be $\lcp(\penc{\hat{T}}, \penc{T[i..]}) = h_{\fce(\hat{T})}$.
  It is enough to find one $\penc{T[\SUR_{T}(q)..]}$ with $\Lstr[q] \ge \fce(\hat{T})$ for $q \in [l..r]$
  because it is necessary and sufficient to have $\lcp(\penc{\hat{T}}, \penc{T[\SUR_{T}(q)-1..]}) = h_{\fce(\hat{T})}$ 
  and $\penc{\hat{T}}[h_{\fce(\hat{T})} + 1] = \infty \neq  h_{\fce(\hat{T})} = \penc{T[\SUR_{T}(q)-1..]}[h_{\fce(\hat{T})}+1]$.
  Note that there could be two or more p-encoded suffixes that satisfy the condition and their lexicographic order may change by extension.
  In the then-block at Line~\ref{hatk:fce_at}, the algorithm computes the rank of the 
  lexicographically smallest p-encoded suffix that has an $\lcpinf$-value larger than $\fce(\hat{T})$ with 
  $\penc{T[\SUR_{T}(q)-1..]} = \penc{T[\SUR_{T}(\LF_{T}(q))..]}$, which is the p-succ of $\hat{T}$ in this case.

  The case with $e \neq \fce(\hat{T})$ is processed in the else-block at Line~\ref{hatk:e}.
  Here, it is good to keep in mind that when we enter this else-block, 
  $\lcpinf(\penc{T}, \penc{T[i..]}) \le e$ or $\fce(T[i-1..]) \le e$ holds for any proper suffix $T[i..]$ of $T$,
  since otherwise $\hat{k}$ must be reporeted in a previous round of the foreach loop.

  When the if-condition at Line~\ref{hatk:e_pred} holds, $\penc{T[\SUR_{T}(q)..]}$ is the lexicographically 
  smaller closest p-encoded suffix to $\penc{T}$ such that $\lcpinf(\penc{\hat{T}}, \penc{T[\SUR_{T}(q)-1..]}) \ge e+1$,
  or equivalently $\lcp(\penc{\hat{T}}, \penc{T[\SUR_{T}(q)-1..]}) > h_{e}$.
  For any p-encoded suffix in $\{ \penc{T[\SUR_{T}(p)..]} \}_{p = q+1}^{k-1}$
  its extended suffix is lexicographically smaller than $\penc{T[\SUR_{T}(q)-1..]}$ due to Corollary~\ref{cor:order_e},
  and never closer to $\penc{\hat{T}}$ than $\penc{T[\SUR_{T}(q)-1..]}$.
  At Line~\ref{hatk:e_pred_mi}, the algorithm computes the maximal interval $[l'..r']$ such that every p-encoded suffix in 
  $\{ \penc{T[\SUR_{T}(p)..]} \}_{p = l'}^{r'}$ shares the common prefix of length
  $h' := \min \{ |T| - \SUR_{T}(q) + 1, \select_{\infty}(\penc{T[\SUR_{T}(q)..]}, e+1) \}$ with $\penc{T[\SUR_{T}(q)..]}$.
  Since any $\penc{T[i..]} \in \{ \penc{T[\SUR_{T}(p)..]} \}_{p = 1}^{l'-1}$ 
  has an $\lcp$-value smaller than $h'$ with $\penc{T[\SUR_{T}(q)..]}$, 
  it follows from Lemma~\ref{lem:penc_order} that $\penc{T[i-1..]} < \penc{T[\SUR_{T}(q)-1..]}$.
  Also, for any $\penc{T[i..]} \in \{ \penc{T[\SUR_{T}(p)..]} \}_{p = k+1}^{|T|}$,
  the aforementioned precondition to enter the else-block at Line~\ref{hatk:e} leads to 
  $\penc{\hat{T}} < \penc{T[i-1..]}$ or $\penc{T[i-1..]} < \penc{T[\SUR_{T}(q)..]}$ by Lemma~\ref{lem:penc_order}.
  So far we have confirmed that the $\lcp$-value between $\penc{\hat{T}}$ and the p-pred of $\penc{\hat{T}}$ is less than $h'$,
  which implies that the p-pred is the largest p-encoded suffix that is prefixed by $x := \penc{T[\SUR_{T}(q)-1..]}[..h']$.
  If $q' \leftarrow \LFNQ_{\ge e+2}(\Lstr_{T}, r')$ computed at Line~\ref{hatk:e_pred_qp} is in $[l'..r']$,
  $\penc{T[\SUR_{T}(q')-1..]} = \penc{T[\LF_{T}(q')..]}$ is prefixed by $x \cdot \infty$
  and the p-pred is the largest p-encoded suffix that is prefixed by $x \cdot \infty$,
  which can be computed by $\max \GetInterval(\LF_{T}(q'), e+2)$ because 
  $\penc{T[\SUR_{T}(q')-1..]}[..h'+1] = \penc{T[\SUR_{T}(\LF_{T}(q')..]}[..h'+1] = x \cdot \infty$ contains exactly $e+2$ $\infty$'s.
  If $q' \notin [l'..r']$, the p-pred is the largest p-encoded suffix that is prefixed by $x \cdot h'$,
  which is $\penc{T[\SUR_{T}(\LF_{T}(q))..]}$.

  When the if-condition at Line~\ref{hatk:e_succ} holds, $\penc{T[\SUR_{T}(q)..]}$ is the lexicographically 
  larger closest p-encoded suffix to $\penc{T}$ such that $\lcpinf(\penc{\hat{T}}, \penc{T[\SUR_{T}(q)-1..]}) \ge e+1$,
  or equivalently $\lcp(\penc{\hat{T}}, \penc{T[\SUR_{T}(q)-1..]}) > h_{e}$.
  For any p-encoded suffix in $\{ \penc{T[\SUR_{T}(p)..]} \}_{p = k+1}^{q-1}$
  its extended suffix is lexicographically smaller than $\penc{T[\SUR_{T}(q)-1..]}$ due to Corollary~\ref{cor:order_e},
  and never closer to $\penc{\hat{T}}$ than $\penc{T[\SUR_{T}(q)-1..]}$.
  At Line~\ref{hatk:e_succ_mi}, the algorithm computes the maximal interval $[l'..r']$ such that every p-encoded suffix in 
  $\{ \penc{T[\SUR_{T}(p)..]} \}_{p = l'}^{r'}$ shares the common prefix of length
  $h' := \min \{ |T| - \SUR_{T}(q) + 1, \select_{\infty}(\penc{T[\SUR_{T}(q)..]}, e+1) \}$ with $\penc{T[\SUR_{T}(q)..]}$.
  Since any $\penc{T[i..]} \in \{ \penc{T[\SUR_{T}(p)..]} \}_{p = r'+1}^{|T|}$ 
  has an $\lcp$-value smaller than $h'$ with $\penc{T[\SUR_{T}(q)..]}$, 
  it follows from Lemma~\ref{lem:penc_order} that $\penc{T[i-1..]} < \penc{T[\SUR_{T}(q)-1..]}$.
  Also, for any $\penc{T[i..]} \in \{ \penc{T[\SUR_{T}(p)..]} \}_{p = 1}^{k-1}$,
  the aforementioned precondition to enter the else-block at Line~\ref{hatk:e} leads to
  $\penc{T[i-1..]} < \penc{\hat{T}}$ by Lemma~\ref{lem:penc_order}.
  So far we have confirmed that the $\lcp$-value between $\penc{\hat{T}}$ and the p-succ of $\penc{\hat{T}}$ is less than $h'$,
  which implies that the p-succ is the smallest p-encoded suffix that is prefixed by $x := \penc{T[\SUR_{T}(q)-1..]}[..h']$.
  If $q' \leftarrow \RFNQ_{e+1}(\Lstr_{T}, l')$ computed at Line~\ref{hatk:e_pred_qp} is in $[l'..r']$, 
  the p-succ is the smallest p-encoded suffix that is prefixed by $x \cdot h'$,
  which is $\penc{T[\SUR_{T}(\LF_{T}(q'))..]}$.
  If $q' \notin [l'..r']$, the p-succ is the smallest p-encoded suffix that is prefixed by $x \cdot \infty$,
  which can be computed by $\min \GetInterval(\LF_{T}(q), e+2)$ because 
  $\penc{T[\SUR_{T}(q)-1..]}[..h'+1] = \penc{T[\SUR_{T}(\LF_{T}(q))..]}[..h'+1] = x \cdot \infty$ contains exactly $e+2$ $\infty$'s.

  When we enter the if-then-block at Line~\ref{hatk:e_at}, it is guaranteed that $\lcpmax \le h_{e}$.
  In order to check if there exists a p-encoded suffix $\penc{T[i+1..]}$ that satisfies the former condition 
  to be $\lcp(\penc{\hat{T}}, \penc{T[i..]}) = h_{e}$, 
  the algorithm computes $q \leftarrow \LFNQ_{e}(\Lstr_{T}, r))$.
  If $q \in [l..r]$, $\penc{T[\SUR_{T}(q)..]}$ is the lexicographically largest p-encoded suffix that satisfies the condition, 
  and by Corollary~\ref{cor:order_same}, its extended suffix $\penc{T[\SUR_{T}(q)-1..]}$ must be the largest one to have 
  $\lcp(\penc{\hat{T}}, \penc{T[\SUR_{T}(q)-1..]}) = h_{e}$.
  Therefore, $\penc{T[\SUR_{T}(q)-1..]}$ is the p-pred of $\penc{\hat{T}}$, and $\hat{k} = 1 + \LF_{T}(q)$.
\end{proof}

\begin{algorithm2e}[t]
\caption{Algorithm to compute $\hat{k}$.}
\label{algo:hatk}
\SetKw{DownTo}{down to}
\SetKw{Return}{return}
\SetKwFunction{GetInterval}{GetMI}
\ForEach{$e \leftarrow \min\{ \fce(\hat{T}), \max\{\LCPinf_{T}[k], \LCPinf_{T}[k+1]\} \}$ \DownTo $1$}{
  $[l..r] \leftarrow$ \GetInterval{$k$, $e$}\;\label{hatk:lr}
  \If{$e = \fce(\hat{T})$}{\label{hatk:fce}
    \lIf{$(q \leftarrow \LFNQ_{e}(\Lstr_{T}, k)) \in [l..r]$}{\label{hatk:fce_pred}
      \Return $1 + \LF_{T}(q)$
    }
    \lIf{$(q \leftarrow \RFNQ_{e}(\Lstr_{T}, k)) \in [l..r]$}{\label{hatk:fce_succ}
      \Return $\LF_{T}(q)$
    }
    \lIf{$(q \leftarrow \RFNQ_{\ge e+1}(\Lstr_{T}, l)) \in [l..r]$}{\label{hatk:fce_at}
      \Return $\min$ \GetInterval{$\LF_{T}(q)$, $e+1$}
    }
  }
  \Else{\label{hatk:e}
    \If{$(q \leftarrow \LFNQ_{\ge e+1}(\Lstr_{T}, k)) \in [l..r]$}{\label{hatk:e_pred}
      $[l'..r'] \leftarrow$ \GetInterval{$q$, $e+1$}\;\label{hatk:e_pred_mi}
      $q' \leftarrow \LFNQ_{\ge e+2}(\Lstr_{T}, r')$\;\label{hatk:e_pred_qp}
      \lIf{$q' \in [l'..r']$}{
        \Return $1 + \max$ \GetInterval{$\LF_{T}(q')$, $e+2$}
      }
      \lElse{
        \Return $1 + \LF_{T}(q)$
      }
    }
    \If{$(q \leftarrow \RFNQ_{\ge e+1}(\Lstr_{T}, k)) \in [l..r]$}{\label{hatk:e_succ}
      $[l'..r'] \leftarrow$ \GetInterval{$q$, $e+1$}\;\label{hatk:e_succ_mi}
      $q' \leftarrow \RFNQ_{e+1}(\Lstr_{T}, l')$\;\label{hatk:e_succ_qp}
      \lIf{$q' \in [l'..r']$}{
        \Return $\LF_{T}(q')$
      }
      \lElse{
        \Return $\min$ \GetInterval{$\LF_{T}(q)$, $e+2$}
      }
    }
    \lIf{$(q \leftarrow \LFNQ_{e}(\Lstr_{T}, r)) \in [l..r]$}{\label{hatk:e_at}
      \Return $1 + \LF_{T}(q)$
    }
  }
}
\end{algorithm2e}

\subsection{How to maintain $\LCPinf$}\label{sec:lcpinf}
Suppose that we have $k = \SR_{T}(1)$, $\hat{k} = \SR_{\hat{T}}(1)$, $\Lstr_{T}$, $\Fstr_{T}$, 
we show how to compute $\lcpinf$-values of $\penc{\hat{T}}$ with 
its p-pred $\penc{T[\SUR_{T}(\hat{k})-1..]}$ and p-succ $\penc{T[\SUR_{T}(\hat{k})..]}$ 
to maintain $\LCPinf$.

We focus on $\lcpinf(\penc{\hat{T}}, \penc{T[\SUR_{T}(\hat{k})..]})$ because the other one can be treated similarly.
We apply Lemma~\ref{lem:penc_order} by setting $x = \hat{T}$ and $y = T[\SUR_{T}(\hat{k})..]$
if $k < \FL_{T}(\hat{k})$ (otherwise we swap their roles for $x$ and $y$).
In order to get $\lcpinf(\penc{x}, \penc{y})$, all we need are 
$\fce(x) = \fce(\hat{T})$, $\fce(y) = \Fstr[\hat{k}]$ and
$e = \lcpinf(\penc{x[2..]}, \penc{y[2..]})$.
For the computation of $e$ we use Lemma~\ref{lem:lcp_rmq}, i.e,
$e = \lcpinf(\penc{x[2..]}, \penc{y[2..]}) = \RmQ_{\LCPinf_{T}}(k+1, \FL_{T}(\hat{k}))$.

\subsection{Dynamic data structures and analysis}\label{sec:analysis}
We consider constructing $\Fstr_{T}$, $\Lstr_{T}$ and $\LCPinf_{T}$ 
of a p-string $T$ over an alphabet $(\sSigma \cup \pSigma) \subseteq [0..\sigma]$ of length $n$ online.
Let $\ssigma$ and respectively $\psigma$ be the numbers of distinct s-symbols and p-symbols that appear in $T$.

First we show data structures needed to implement our algorithm presented in the previous subsections.
For $\Fstr_{T}$ we maintain a dynamic string of Lemma~\ref{lem:drs}
supporting random access, insertion, $\rank$ and $\select$ queries in $O(\frac{\lg n}{\lg \lg n})$ time 
and $O(n \lg \sigma)$ bits of space.
For $\LCPinf_{T}$ we maintain a dynamic string of Lemma~\ref{lem:dwm} to support
random access, insertion, $\RmQ$, $\LFNQ$ and $\RFNQ$ queries
in $O(\frac{\lg \psigma \lg n}{\lg \lg n})$ time
and $O(n \lg \psigma)$ bits of space.

If we build a dynamic string of Lemma~\ref{lem:dwm} for $\Lstr_{T}$,
the query time would be $O(\frac{\lg \sigma \lg n}{\lg \lg n})$.
Since our algorithm does not use $\RmQ$, $\LFNQ$ and $\RFNQ$ queries for s-symbols,
we can reduce the query time to $O(\frac{\lg \psigma \lg n}{\lg \lg n})$ as follows.
We represent $\Lstr_{T}$ with one level of a wavelet tree, where a bit vector $B_T$ partitions the alphabet into $\sSigma$ and $[1..|T|_{p}]$
and thus has pointers to $X_{T}$ and $Y_{T}$ storing respectively the sequence over $\sSigma$ and that over $[1..|T|_{p}]$ of $\Lstr_{T}$.
We represent the former and the latter by the data structures described in Lemmas~\ref{lem:drs} and~\ref{lem:dwm}, respectively, 
since we only need the aforementioned queries such as $\RmQ$ on $Y_{T}$.
Then, queries on $\Lstr_{T}$ can be answered in $O(\frac{\lg \psigma \lg n}{\lg \lg n})$ time
using $O(n \lg \sigma)$ bits of space.

In addition to these dynamic strings for $\Fstr_{T}$, $\Lstr_{T}$ and $\LCPinf_{T}$,
we consider another dynamic string $Z_{T}$.
Let $Z_{T}$ be a string that is obtained by extracting the leftmost occurrence of every p-symbol in $T$.
Thus, $|Z_{T}| = \psigma \le n$.
A dynamic string $Z_{T}$ of Lemma~\ref{lem:drs} enables us to compute $\fce(cT)$ for a p-symbol $c$ 
by $\fce(cT) = \min\{\infty, \select_{c}(Z_{T}, 1)\}$ in 
$O(\frac{\lg \psigma}{\lg \lg \psigma})$ time and $O(\psigma \lg \psigma)$ bits of space.

We also maintain a fusion tree of~\cite{Patrascu2014DynamicIntegerSets_OptimalRankSelectPred} 
of $O(\ssigma \lg n) = O(n \lg \sigma)$ bits to maintain the set of s-symbols used in the current $T$,
which enables us to compute $b$ in Lemma~\ref{lem:add_s-symbol} in $O(\frac{\lg \ssigma}{\lg \lg \ssigma})$ time.

We are now ready to prove the following lemma.
\begin{lemma}\label{lemma:online}
  $\Fstr_{T}$, $\Lstr_{T}$ and $\LCPinf_{T}$ for a p-string of length $n$ over an alphabet $(\sSigma \cup \pSigma) \subseteq [0..\sigma]$
  can be constructed online in
  $O(n \frac{\lg \psigma \lg n}{\lg \lg n})$ time and $O(n \lg \sigma)$ bits of space,
  where $\psigma$ is the number of distinct p-symbols used in the p-string.
\end{lemma}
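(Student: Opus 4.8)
The plan is to assemble Lemma~\ref{lemma:online} from the three algorithmic ingredients developed in Subsections~\ref{sec:khat} and~\ref{sec:lcpinf}, together with the dynamic data structures catalogued above, and then bound the cost of one online step followed by summing over all $n$ prepending steps. First I would fix notation for a single step that updates $T$ to $\hat T = cT$, as in the beginning of Section~\ref{sec:algorithm}, recalling that $\Fstr_{\hat T}$, $\Lstr_{\hat T}$ and $\LCPinf_{\hat T}$ differ from their counterparts for $T$ only by: (i) overwriting $\Lstr_T[k]$ (the $\$$ entry) with $\fce(\hat T)$, (ii) inserting $\$$ into $\Lstr_T$ and $\fce(\hat T)$ into $\Fstr_T$ at position $\hat k$, and (iii) inserting into $\LCPinf_T$ at position $\hat k$ the value $\lcpinf$ of $\penc{\hat T}$ with its p-succ, while the old entry at position $\hat k$ of $\LCPinf_T$ (now the gap between $\penc{\hat T}$'s p-succ and the next suffix) must be recomputed as $\lcpinf(\penc{\hat T},\penc{T[\SUR_T(\hat k)-1..]})$ --- i.e. exactly the two $\lcpinf$-values produced in Subsection~\ref{sec:lcpinf}. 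I would state explicitly that correctness of this update scheme follows from the definitions of $\Fstr$, $\Lstr$, $\LCPinf$ together with Corollary~\ref{cor:pbwt_order}, and that $\hat k$ is computed correctly by Lemma~\ref{lem:add_s-symbol} (when $c \in \sSigma$) or Lemma~\ref{lem:hatk} (when $c \in \pSigma$), with the degenerate subcases ($T$ has no p-symbol; $c \in \pSigma$ but no p-symbol in $T$) handled by the remarks preceding Lemma~\ref{lem:lcp_rmq}.

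Next I would account for the per-step running time. The dominant contributions are the $O(\fce(\hat T)) = O(\psigma)$ iterations one might naively fear in the \texttt{foreach} loop of Algorithm~\ref{algo:hatk}; the key observation making the bound $O(\frac{\lg\psigma\,\lg n}{\lg\lg n})$ rather than $O(\psigma \cdot \frac{\lg\psigma\,\lg n}{\lg\lg n})$ is that the loop \emph{terminates} as soon as one of the \texttt{return} statements fires, and the analysis in the proof of Lemma~\ref{lem:hatk} shows that this happens at the first value of $e$ for which the relevant $\Lstr_T$-query lands inside the current interval $[l..r]$ --- so the loop body is executed only $O(1)$ times before returning. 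Wait: this is the crux, and I would want to argue it carefully. The point is that the starting value of $e$ is $\min\{\fce(\hat T), \max\{\LCPinf_T[k], \LCPinf_T[k+1]\}\}$, which already equals (or exceeds by one) the correct $\hat e$ with $\lcpmax \in [h_{\hat e}..h_{\hat e+1}]$; descending $e$ at most once or twice suffices because $\lcpmax$ is pinned between consecutive $h$-values determined by the $\LCPinf$ entries adjacent to rank $k$. Hence each of the $O(1)$ iterations costs $O(1)$ calls to \GetInterval{} (each an $\LFNQ/\RFNQ$ on $\LCPinf_T$, costing $O(\frac{\lg\psigma\,\lg n}{\lg\lg n})$ by Lemma~\ref{lem:dwm}), $O(1)$ calls to $\LFNQ/\RFNQ$ on $\Lstr_T$ (costing $O(\frac{\lg\psigma\,\lg n}{\lg\lg n})$ by the reduced wavelet-tree representation of $\Lstr_T$ described above), and $O(1)$ calls to $\LF_T$ (one $\rank$ on $\Lstr_T$ plus one $\select$ on $\Fstr_T$, costing $O(\frac{\lg\psigma\,\lg n}{\lg\lg n})$ and $O(\frac{\lg n}{\lg\lg n})$ respectively). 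The s-symbol case via Lemma~\ref{lem:add_s-symbol} costs one $\LFNQ$ on $\Lstr_T$, one $\LF_T$, and one fusion-tree predecessor query, i.e.\ $O(\frac{\lg\psigma\,\lg n}{\lg\lg n} + \frac{\lg\ssigma}{\lg\lg\ssigma})$. Computing $\fce(\hat T)$ costs one $\select$ on $Z_T$, i.e.\ $O(\frac{\lg\psigma}{\lg\lg\psigma})$. The $\LCPinf$ update of Subsection~\ref{sec:lcpinf} costs one $\RmQ$ on $\LCPinf_T$, one $\FL_T$, plus evaluating Table~\ref{table:cases} of Lemma~\ref{lem:penc_order}, all $O(\frac{\lg\psigma\,\lg n}{\lg\lg n})$. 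Finally, applying the updates (i)--(iii) is $O(1)$ insertions/overwrites on the dynamic strings, each $O(\frac{\lg\sigma\,\lg n}{\lg\lg n})$ on $\Fstr_T$ by Lemma~\ref{lem:drs} --- but note we may pessimistically write $O(\frac{\lg n}{\lg\lg n})$ for $\Fstr_T$ since its alphabet bound only enters the \emph{space}, and Lemma~\ref{lem:drs}'s time is alphabet-independent --- and $O(\frac{\lg\psigma\,\lg n}{\lg\lg n})$ on $\Lstr_T$ and $\LCPinf_T$. Adding these and using $\psigma \le \sigma$ and $\ssigma \le \sigma$, one step costs $O(\frac{\lg\psigma\,\lg n}{\lg\lg n})$, so $n$ steps cost $O(n\,\frac{\lg\psigma\,\lg n}{\lg\lg n})$.

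For the space bound I would simply tally the structures listed in Subsection~\ref{sec:analysis}: $\Fstr_T$ in $O(n\lg\sigma)$ bits (Lemma~\ref{lem:drs}), $\LCPinf_T$ in $O(n\lg\psigma)$ bits (Lemma~\ref{lem:dwm}, since every entry is in $[0..\psigma]$ --- here I would note $\LCPinf$ values never exceed $|T|_p \le \psigma$ because $\lcpinf$ counts $\infty$'s, which correspond to distinct p-symbols), the two-part representation of $\Lstr_T$ ($X_T$ over $\sSigma$ via Lemma~\ref{lem:drs} plus $Y_T$ over $[1..|T|_p]$ via Lemma~\ref{lem:dwm} plus the partition bitvector $B_T$), totalling $O(n\lg\sigma)$ bits, $Z_T$ in $O(\psigma\lg\psigma) = O(n\lg\sigma)$ bits, and the fusion tree in $O(\ssigma\lg n) = O(n\lg\sigma)$ bits --- so $O(n\lg\sigma)$ overall. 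I would also remark that all these are dynamic structures supporting $O(1)$-amortized (or $O(\frac{\lg n}{\lg\lg n})$-worst-case via standard rebuilding) growth, so maintaining them through $n$ insertions respects the claimed bounds.

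I expect the main obstacle to be the running-time argument, specifically justifying that Algorithm~\ref{algo:hatk}'s outer loop performs only $O(1)$ iterations before returning rather than $\Theta(\psigma)$ in the worst case. This is where the non-triviality of the paper's improvement over~\cite{2022HashimotoHKYS_ComputParamBurrowWheelTrans_SPIRE} lives: one must extract from the (rather intricate) correctness proof of Lemma~\ref{lem:hatk} the termination behaviour --- that the loop is entered at the essentially correct $e = \hat e$ (up to an additive $1$) because of the initialisation from $\max\{\LCPinf_T[k],\LCPinf_T[k+1]\}$, and that exactly at $e = \hat e$ or $e = \hat e - 1$ one of the guarded \texttt{return}s succeeds. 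A careful write-up would phrase this as: by Lemma~\ref{lem:lcp_rmq} and Corollary~\ref{cor:lcpinf_atmost}, $\lcpinf(\penc{\hat T}, \text{p-pred/p-succ}) \le \max\{\LCPinf_T[k], \LCPinf_T[k+1]\} + 1$, hence $\hat e$ is within one of the starting $e$, and the case analysis in Lemma~\ref{lem:hatk} shows the correct branch fires within the first two iterations. Everything else is bookkeeping: matching each algorithmic primitive to the query it invokes, checking that $\Lstr$'s split representation indeed supports every needed query (only $\RmQ,\LFNQ,\RFNQ$ with p-symbol predicates, plus $\rank,\select,\access$ for both alphabet parts), and confirming the alphabet parameters feeding the space bounds.
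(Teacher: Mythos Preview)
Your proposal has a genuine gap in the running-time analysis of Algorithm~\ref{algo:hatk}. You claim that the \texttt{foreach} loop executes only $O(1)$ times per prepending step, arguing that the starting value of $e$ is within one of the terminating level $\hat e$. This is not true in general: a single step can require $\Theta(\psigma)$ iterations in the worst case, namely when the starting value $e = \max\{\LCPinf_T[k], \LCPinf_T[k+1]\}$ is close to $\psigma$ while $\hat e = \max\{\LCPinf_{\hat T}[\hat k], \LCPinf_{\hat T}[\hat k+1]\}$ is close to $0$. Your appeal to Corollary~\ref{cor:lcpinf_atmost} yields $\hat e \le e + 1$, which bounds $\hat e$ from above; but what you need is a bound on $e - \hat e$, and the inequality runs the wrong way for that purpose. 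There is no worst-case $O(1)$ bound on the number of iterations.

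The correct argument, which the paper uses, is amortized. One treats $e$ as a potential carried by the current string $T$. A step of Algorithm~\ref{algo:hatk} performs $O(2 + e - \hat e)$ iterations (the loop starts at $\min\{\fce(\hat T), e\} \le e$ and returns by the time the counter has descended to the level determined by $\hat e$), so its excess cost is charged to the drop in potential. Corollary~\ref{cor:lcpinf_atmost} is precisely what guarantees the potential increases by at most $1$ per step; summing over all $n$ steps, the total number of loop iterations --- hence the total number of queries on the dynamic strings --- is $O(n)$. Multiplying by the $O(\frac{\lg\psigma\,\lg n}{\lg\lg n})$ per-query cost gives the claimed time bound. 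Your space accounting and the catalogue of data structures are fine; only the per-step time argument must be replaced by this amortization.
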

\begin{proof}
  We maintain the dynamic data structures of $O(n \lg \sigma)$ bits 
  described in this subsection while prepending a symbol to the current p-string.
  For a single step of updating $T$ to $\hat{T} = cT$ with $c \in (\sSigma \cup \pSigma)$, 
  we compute $\hat{k} = \SR_{\hat{T}}(1)$ as described in Subsection~\ref{sec:khat} and
  obtain $\Fstr_{\hat{T}}$ and $\Lstr_{\hat{T}}$ by replacing $\$$ in $\Lstr_{T}$ at $k = \SR_{T}(1)$ by $\fce(\hat{T})$
  and inserting $\$$ and $\fce(\hat{T})$ into the $\hat{k}$-th position of $\Lstr_{T}$ and $\Fstr_{T}$, respectively.
  $\LCPinf$ is updated as described in Subsection~\ref{sec:lcpinf}.

  If $c \in \sSigma$, the computation of $\hat{k}$ based on Lemma~\ref{lem:add_s-symbol} requires a constant number of queries.
  If $c \in \pSigma$, Algorithm~\ref{algo:hatk} computes $\hat{k}$ invoking $O(2+e-\hat{e})$ queries, 
  where $e = \max\{\LCPinf_{T}[k], \LCPinf_{T}[k+1]\}$ and $\hat{e} = \max\{\LCPinf_{\hat{T}}[\hat{k}], \LCPinf_{\hat{T}}[\hat{k}+1]\}$.
  The value $e$ can be seen as a potential held by the current string $T$, which upper bounds the number of queries.
  The number of queries in a single step can be $O(\psigma)$ in the worst case 
  when $e$ and $\hat{e}$ are close to $\psigma$ and respectively $0$,
  but this will reduce the potential for later steps, which allows us to give an amortized analysis.
  Since a single step increases the potential at most 1 by Corollary~\ref{cor:lcpinf_atmost},
  the total number of queries can be bounded by $O(n)$.

  Since we invoke $O(n)$ queries that take $O(\frac{\lg \psigma \lg n}{\lg \lg n})$ time each,
  the overall time complexity is $O(n\frac{\lg \psigma \lg n}{\lg \lg n})$.
\end{proof}

\section{Extendable compact index for p-matching}\label{sec:bws}
In this section, we show that $\Lstr_{T}$, $\Fstr_{T}$ and $\LCPinf_{T}$ can serve as an index for p-matching.

First we show that we can support backward search,
a core procedure of BWT-based indexes, 
with the data structures for $\Lstr_{T}$, $\Fstr_{T}$ and $\LCPinf_{T}$
described in Subsection~\ref{sec:analysis}.
For any p-string $w$, let $w$-interval be the maximal interval $[l..r]$ such that 
$\penc{T[\SUR_{T}(p)..]}$ is prefixed by $\penc{w}$ for any $p \in [l..r]$.
We show the next lemma for a single step of backward search, which computes $cw$-interval from $w$-interval.
\begin{lemma}\label{lemma:bws}
  Suppose that we have data structures for $\Lstr_{T}$, $\Fstr_{T}$ and $\penc{\hat{T}}$ described in Subsection~\ref{sec:analysis}.
  Given $w$-interval $[l..r]$ and $c \in (\sSigma \cup \pSigma)$,
  we can compute $cw$-interval $[l'..r']$ in $O(\frac{\lg \psigma \lg n}{\lg \lg n})$ time.
\end{lemma}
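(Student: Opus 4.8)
The plan is to mimic the classical BWT backward-search step $\LF$-mapping, but to cope with the fact that prepending $c$ to the p-strings in the $w$-interval can split them into several groups according to the value of $\fce$, and that the lexicographic order among the extended p-encoded suffixes is governed by Lemma~\ref{lem:penc_order} rather than by a plain symbol comparison. First I would split into the two cases $c \in \sSigma$ and $c \in \pSigma$. When $c \in \sSigma$, Case~(A2) of Lemma~\ref{lem:penc_order} tells us that extending by the same s-symbol $c$ preserves the lexicographic order, so the new interval is obtained exactly as in the classical FM-index: $l' = \select_{c}(\Fstr_{T}, \rank_{c}(\Lstr_{T}, l-1)+1)$ and $r' = \select_{c}(\Fstr_{T}, \rank_{c}(\Lstr_{T}, r))$, which is a constant number of $\rank$/$\select$ calls on the dynamic strings, hence $O(\frac{\lg\psigma\lg n}{\lg\lg n})$ time.

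The case $c \in \pSigma$ is the substantive one. Here I would observe that which p-encoded suffixes $\penc{T[\SUR_{T}(p)-1..]}$, for $p\in[l..r]$, are prefixed by $\penc{cw}$ depends on $\fce(c\,T[\SUR_{T}(p)..])$, i.e.\ on $\Lstr_{T}[p]$, and on whether the leftmost occurrence of $c$ inside $T[\SUR_{T}(p)..]$ (the position $h$ of Lemma~\ref{lem:penc_order}) falls beyond the prefix $\penc{w}$ already matched. Concretely, if $|w|_{p}$ counts how many distinct p-symbols appear in $w$ and $c$ is a p-symbol occurring in $w$, then $\penc{cw}$ has a distinguished $\infty$ position turning into a back-pointer; the relevant $\Lstr_{T}$-value is $\fce(cw) = \min\{\infty,\select_{c}(Z_{T},1)\}$-type information specialized to the context, computable from $Z_{T}$ and the known $\penc{\hat T}$. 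I would then use Lemma~\ref{lem:penc_order} (as summarized in Table~\ref{table:cases}) to partition $[l..r]$ into the sub-block of positions whose extension keeps $\penc{cw}$ as a prefix, using $\LFNQ$/$\RFNQ$ queries on $\Lstr_{T}$ restricted to its p-symbol part $Y_{T}$ to locate the boundary positions $q = \LFNQ_{\ge?}(\Lstr_{T},r)$ and $q = \RFNQ_{\ge?}(\Lstr_{T},l)$, exactly in the style of Algorithm~\ref{algo:hatk}. Mapping these boundary positions through $\LF_{T}$, together with at most one invocation of $\GetInterval$ to widen to the maximal interval sharing the appropriate number of $\infty$'s, yields $[l'..r']$. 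Each such primitive is a single query on one of the dynamic structures of Lemmas~\ref{lem:drs} and~\ref{lem:dwm}, so only $O(1)$ of them are needed and the bound $O(\frac{\lg\psigma\lg n}{\lg\lg n})$ follows.

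The main obstacle I anticipate is the careful case analysis for $c\in\pSigma$: one must argue that, after extension, the positions in $[l..r]$ whose p-encoded suffixes remain prefixed by $\penc{cw}$ form a contiguous sub-interval and that its endpoints are obtained by the stated $\LFNQ$/$\RFNQ$-then-$\LF$ recipe. This is precisely where Corollary~\ref{cor:order_same} and Corollary~\ref{cor:order_e} do the work — the former guarantees order preservation among the suffixes that share the same $\fce$-value after extension, and the latter pins down where the suffixes with smaller $\fce$-value land relative to those with the distinguished $\infty$ — so the proof reduces to invoking the right line of Table~\ref{table:cases} for each sub-case, analogously to the correctness argument in the proof of Lemma~\ref{lem:hatk}. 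Once that structural claim is established, the complexity bound is immediate from the constant query count.
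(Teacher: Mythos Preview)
Your treatment of $c\in\sSigma$ and of the sub-case ``$c\in\pSigma$ and $c$ occurs in $w$'' is essentially what the paper does (order is preserved by Corollary~\ref{cor:order_same}, so two endpoint queries plus $\LF_{T}$ suffice). The genuine gap is in the remaining sub-case, $c\in\pSigma$ with $c$ \emph{not} occurring in $w$, which you do not clearly separate out and whose structural difficulty you underestimate.

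In that sub-case, with $e=|w|_{p}$, the positions $p\in[l..r]$ whose extensions are prefixed by $\penc{cw}=\infty\cdot\penc{w}$ are exactly those with $\Lstr_{T}[p]>e$. These positions need not form a contiguous sub-interval of $[l..r]$, and more importantly the map $\LF_{T}$ does \emph{not} preserve their relative order: for two such positions $p<q$ in $[l..r]$ one can have $\lcpinf$ strictly larger than $e$ between the corresponding suffixes, and then Cases~(B2)/(B3) of Lemma~\ref{lem:penc_order} can swap the order after extension. The paper states this explicitly (``$\LF_{T}(s)$ is not necessarily $l'$ \ldots''). Consequently your ``find the boundary $q$ by $\LFNQ/\RFNQ$ in $[l..r]$, then apply $\LF_{T}$, then one $\GetInterval$'' recipe does not locate $l'$ or $r'$ in general, and Corollaries~\ref{cor:order_same}/\ref{cor:order_e} do not save you here because neither the $\fce$-values are equal nor is one of them $\le e$.

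The paper's fix uses an ingredient you do not mention: 2D range counting on the wavelet structure for $\Lstr_{T}$. It first counts how many $p\in[l..r]$ satisfy $\Lstr_{T}[p]>e$ to get the width $r'-l'+1$; then it finds any single $s\in[l..r]$ with $\Lstr_{T}[s]>e$, computes $[l_{e}..r_{e}]=\GetInterval(s,e)\supseteq[l..r]$ and $[l'_{e+1}..r'_{e+1}]=\GetInterval(\LF_{T}(s),e+1)\supseteq[l'..r']$, and shows via Case~(B4) that the positions in $[l_{e}..l-1]$ with $\Lstr_{T}[p]>e$ map exactly to $[l'_{e+1}..l'-1]$. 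Hence $l'=l'_{e+1}+x$ where $x$ is another range count over $[l_{e}..l-1]$. Your proposal would be complete once you isolate this sub-case and plug in this range-counting plus two-$\GetInterval$ argument; the constant-query bound then still holds.
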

\begin{proof}
  We show that we can compute $cw$-interval from $w$-interval
  using a constant number of queries supported on $\Lstr_{T}$, $\Fstr_{T}$ and $\penc{\hat{T}}$,
  which takes $O(\frac{\lg \psigma \lg n}{\lg \lg n})$ time each.
\begin{itemize}
  \item When $c$ is in $\sSigma$: 
    $\penc{T[\SUR_{T}(\LF_{T}(p))..]}$ is prefixed by $\penc{cw}$
    if and only if $\penc{T[\SUR_{T}(p)..]}$ is prefixed by $\penc{w}$ and $\Lstr_{T}[p] = c$.
    In other words, $\LF_{T}(p) \in [l'..r']$ if and only if $p \in [l..r]$ and $\Lstr_{T}[p] = c$.
    Then it holds that $l' = \LF_{T}(\RFNQ_{c}(\Lstr_{T}, l))$ and $r' = \LF_{T}(\LFNQ_{c}(\Lstr_{T}, r))$
    due to Corollary~\ref{cor:order_same}.
  \item When $c$ is a p-symbol that appears in $w$:
    Similar to the previous case, $\LF_{T}(p) \in [l'..r']$ if and only if $p \in [l..r]$ and $\Lstr_{T}[p] = \fce(cw)$.
    Then it holds that $l' = \LF_{T}(\RFNQ_{\fce(cw)}(\Lstr_{T}, l))$ and $r' = \LF_{T}(\LFNQ_{\fce(cw)}(\Lstr_{T}, r))$
    due to Corollary~\ref{cor:order_same}.
  \item When $c$ is a p-symbol that does not appear in $w$:
    Let $e = |w|_{p}$.
    Since $p \in [l..r]$ and $\Lstr_{T}[p] > e$ are necessary and sufficient conditions for $\LF_{T}(p)$ to be in $[l'..r']$,
    we can compute $r' - l' + 1$, the width of $[l'..r']$, by counting the number of positions $p$ such that $\Lstr_{T}[p] > e$ with $p \in [l..r]$.
    This can be done with 2D range counting queries, which can also be supported with the wavelet tree of Lemma~\ref{lem:dwm}.
    If $s = \LF_{T}(\RFNQ_{>e}(\Lstr_{T}, l)$ is in $[l..r]$, it holds that $r' - l' + 1 \neq 0$ and $\LF_{T}(s) \in [l'..r']$.
    Note that $\LF_{T}(s)$ is not necessarily $l'$ because 
    p-encoded suffixes $\penc{T[\SUR_{T}(p)..]}$ with $\Lstr_{T}[p] > e$ in $[l..r]$
    do not necessarily preserve the lexicographic order when they are extended by one symbol to the left,
    making it non-straightforward to identify the position $l'$.

    To tackle this problem, we consider $[l_e..r_e] = \GetInterval(s, e)$ and $[l'_{e+1}..r'_{e+1}] = \GetInterval(\LF_{T}(s), e+1)$,
    and show that $l' = l'_{e+1} + x$, where $x$ is the number of positions $p$ such that $\Lstr_{T}[p] > e$ with $p \in [l_e..l-1]$.
    Observe that $[l..r] \subseteq [l_e..r_e]$ and $[l'..r'] \subseteq [l'_{e+1}..r'_{e+1}]$ by definition, and
    that $\LF_{T}(p) \in [l'_{e+1}..r'_{e+1}]$ if and only if $p \in [l_{e}..r_{e}]$ and $\Lstr_{T}[p] > e$ (see Fig.~\ref{fig:bws} for an illustration).
    Also, it holds that $\penc{T[\SUR_{T}(\LF_{T}(p))..]} < \penc{T[\SUR_{T}(\LF_{T}(q))..]}$
    for any $p \in [l_{e}..l-1]$ and $q \in [l..r]$ with $\Lstr_{T}[p] > e$ and $\Lstr_{T}[q] > e$
    because $\lcpinf(\penc{T[\SUR_{T}(p)..]} < \penc{T[\SUR_{T}(q)..]}) = e$, and they fall into Case (B4) of Lemma~\ref{lem:penc_order}.
    Similarly for any $p \in [l..r]$ and $q \in [r+1..r_{e}]$ with $\Lstr_{T}[p] > e$ and $\Lstr_{T}[q] > e$, 
    we have $\penc{T[\SUR_{T}(\LF_{T}(p))..]} < \penc{T[\SUR_{T}(\LF_{T}(q))..]}$.
    Hence, $l' = l'_{e+1} + x$ holds.
\end{itemize}
  This concludes the proof.
\end{proof}

\begin{figure}[t]
  \center{%
    \includegraphics[scale=0.33]{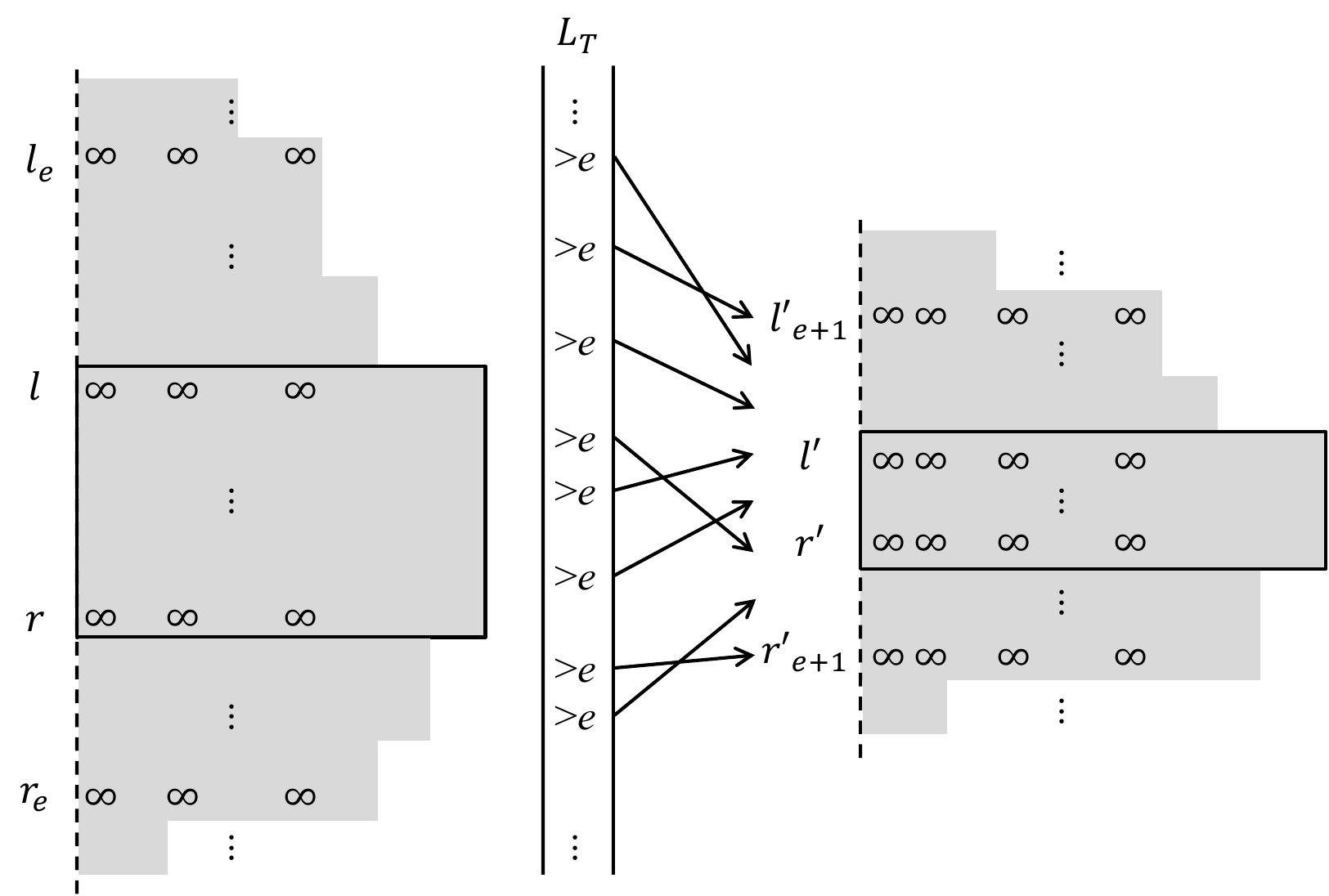}
  }
  \caption{Illustration for the computation of $cw$-interval $[l'..r']$ from $w$-interval $[l..r]$ 
    for the case when $c$ is a p-symbol that does not appear in $w$ with $e = |w|_{p} = 3$.
    The left (resp. right) part shows sorted p-encoded suffixes around $[l..r]$ (resp. $[l'..r']$) with
    grayed areas representing the longest common prefix between $w$ (resp. $cw$) and each p-encoded suffix.
    It depicts that the positions $p$ in $[l_e..l-1]$ with $\Lstr_{T}[p] > e$ are mapped to $[l'_{e+1}..l'-1]$ by LF-mapping.
  }
  \label{fig:bws}
\end{figure}

We are now ready to prove the main theorem:
\begin{proof}[Proof of Theorem~\ref{theo:index}:]
  If we only need counting queries, Lemmas~\ref{lemma:online} and~\ref{lemma:bws} are enough:
  While we build $\Lstr_{T}$, $\Fstr_{T}$ and $\LCPinf_{T}$ online, 
  we can compute $w$-interval $[l..r]$ for a given pattern $w$ of length $m$ using Lemma~\ref{lemma:bws} successively $m$ times,
  spending $O(m \frac{\lg \psigma \lg n}{\lg \lg n})$ time in total.

  Since $\{ \SUR_{T}(i) \mid i \in [l..r] \}$ is the set of occurrences of $w$ in $T$,
  we consider how to access $\SUR_{T}(i)$ in $O(\frac{\lg^2 n}{\lg \sigma \lg \lg n})$ time to support locating queries.
  As is common in BWT-based indexes, we employ a sampling technique (e.g., see~\cite{Ferragina2000ODS}):
  For every $\Theta(\log_{\sigma} n)$ text positions we store the values so that 
  if we apply LF/FL-mapping to $i$ successively at most $\Theta(\log_{\sigma} n)$ times we hit one of the sampled text positions.
  A minor remark is that since our online construction proceeds from right to left, 
  it is convenient to start sampling from the right-end of $T$
  and store the distance to the right-end instead of the text position counted from the left-end of $T$.

  During the online construction of the data structures for $\Lstr_{T}$, $\Fstr_{T}$ and $\LCPinf_{T}$,
  we additionally maintain a dynamic bit vector of length $n$ and dynamic integer string $V_{T}$ of length $O(n / \log_{\sigma} n)$,
  which marks the sampled positions and stores sampled values, respectively.
  We implement $V_{T}$ with the dynamic string of Lemma~\ref{lem:drs} in $O(n \frac{\lg n}{\log_{\sigma} n }) = O(n \lg \sigma)$ bits with $O(\lg n)$ query times.
  In order to support LF/FL-mapping in $O(\frac{\lg n}{\lg \lg n})$ time,
  we also maintain a dynamic string of Lemma~\ref{lem:drs} for $\Lstr_{T}$.
  With the additional space usage of $O(n \lg \sigma)$ bits,
  we can access $\SUR_{T}(i)$ in $O(\frac{\lg^2 n}{\lg \sigma \lg \lg n})$ time
  as we use LF/FL-mapping at most $\Theta(\log_{\sigma} n)$ times.
  This leads to the claimed time bound for locating queries.
\end{proof}

\section*{Acknowledgements}
This work was supported by JSPS KAKENHI Grant Number 19K20213 and 22K11907 (TI).

\bibliography{refs}

\end{document}